\def\@abssec#1{\vspace{.05in}\footnotesize \parindent .2in 
{\bf #1. }\ignorespaces} 
\newtheorem{theorem}{Theorem}[section]
\newtheorem{lemma}[theorem]{Lemma}
\def\ds{\displaystyle}
\def \Rm {\mathbb R}
\def \NN {\mathbb N}
\def \Zm {\mathbb Z}
\newcommand{\eps}{\varepsilon}
\newcommand{\be}{\begin{equation}}
\newcommand{\ee}{\end{equation}}
\newcommand{\bea}{\begin{eqnarray}}
\newcommand{\eea}{\end{eqnarray}}
\newcommand{\bee}{\begin{eqnarray*}}
\newcommand{\eee}{\end{eqnarray*}}
\def\fref#1{{\rm (\ref{#1})}}
\newcommand{\calE}{\mathcal E}
\newcommand{\cout}[1]{}
\author[1,2]{Randy Bartels \footnote{rbartels@morgridge.org}}
\author[3,*]{Olivier Pinaud \footnote{olivier.pinaud@.colostate.edu}}
 \affil[1]{Morgridge Institute for Research, Madison, WI 53715 USA}
 \affil[2]{Biomedical Engineering Department, University of Wisconsin--Madison, Madison, WI 53715 USA}
 \affil[3]{Department of Mathematics, Colorado State University\\ Fort Collins CO, 80523}
\begin{document}
\title{Analysis and extensions of the Multi-Layer Born method}

\maketitle
\begin{abstract}
Simulating scalar wave propagation in strongly heterogeneous media comes at a steep computational cost, and the widely used approach to simplification—split-step operators—sacrifices accuracy.  The recently proposed multi-layer Born method has sought to resolve that problem, but because it discards evanescent modes, also produces large errors. In this work our main goal is to propose solutions to this critical issue by including evanescent modes in the simulation.  We work in a setting where backscattering can be neglected, allowing us to only calculate forward propagation, and derive two possible schemes. A rigorous mathematical analysis of the numerical errors shows one method is more accurate.  This analysis is also helpful for choosing optimally the discretization parameters. In addition, we propose high order versions of the multi-layer Born method that offer a lower computational cost for a given tolerance.
\end{abstract}

\section{Introduction}

One of the principal methods for obtaining information about a sample by optical means is through microscopy. In a microscope, light is produced by a target object through light emission or scattering that is spatially localized in the object. The light generated then propagates through an optical system, where the optical field is shaped and mapped onto a detector. This propagation is modeled with a Green’s function that is called a point spread function (PSF) for optical imaging, mapping the source point in the object to a point in the image. \cite{mertz2019introduction} Ideally, the PSF is localized tightly in the region around the image point and the PSF is shift-invariant across the image field of view. Such properties allow for high-quality imaging with high spatial resolution if the width of the PSF is small and the strength of the signal is well above the noise in the data. 

In an important class of problems, we seek to image an object through or embedded within a medium that introduces optical distortions to light propagating from the object to the detector. \cite{leith1992imaging, ntziachristos2010going, yoon2020deep} These distortions arise from light propagating through a medium with spatial variations in the relative dielectric permittivity, $\varepsilon(\mathbf{r}) = n^2(\mathbf{r})$, where $n$ is the refractive index (RI) of the medium. \cite{goodman2015statistical, ishimaru2017electromagnetic, carminati2021principles} These spatial variations in the RI, $\delta n(\mathbf{r})  = n(\mathbf{r}) - n_0 $, are relative to the mean $n_0 = \langle n(\mathbf{r}) \rangle$. Here $\langle \cdot \rangle$ represents a spatial average. The key properties of the RI variation are characterized by the autocorrelation $B_n (\mathbf{r}) = \int \delta n (\mathbf{r}’) \, \delta n^* (\mathbf{r}’+\mathbf{r}) \, d^3 \, \mathbf{r}’$. The characteristic length for RI variations is the correlation length of $\ell_c^3 = B_n^{-1}(\mathbf{0}) \,  \int B_n(\mathbf{r}) \, d^3 \, \mathbf{r} $ and the strength is determined by the standard deviation of the RI fluctuations, $\Delta n = \sqrt{B_n(\mathbf{0})}$. Full calculation of the field scattering is computationally expensive, which limits the size of scattering volume that can be simulated with reasonable computational resources. In this paper, we seek to develop a robust and computationally efficient simulation of the scattering of electromagnetic radiation by a spatial variation in RI.

Note that given this aim, we exclude from our discussion turbid media, which present a very difficult case for simulation, with extremely rapid spatial variations ($\ell_c \lesssim \lambda$) and strong variations in the permittivity (large RI changes from the mean, $\Delta n \gtrsim 0.1 $) strongly distorting optical waves and leading to strong reflection of incident light---such as in the case of clouds, white paint, and snow. \cite{carminati2021principles} Here, one must resort to computationally intensive strategies such as finite difference time domain methods, and the size of simulations are severely restricted. \cite{ccapouglu2013computation, elmaklizi2014simulating} 

Our focus in this work is on describing weak scattering regimes, where $\Delta n$ is small and light is largely scattered in the forward direction. \cite{fante1975electromagnetic} The intensity of the back-reflected field from a single RI boundary is proportional to $\Delta n^2/ 4 \, n^2$, which is small when $\Delta n \ll n$. Forward scattered light also largely maintains the incident polarization state, allowing for a scalar wave treatment to provide a suitable description of the wave propagation. Large, isolated particles are dominated by forward scattering. \cite{ishimaru2017electromagnetic} Similarly, when the correlation length of a random medium is long ($\ell_c \gg \lambda$), the regime is referred to as turbulent and backscattering and depolarization are known to be weak. \cite{goodman2015statistical} This regime is relevant to astronomy and cleared tissues, where RI variations vary slowing on the scale of the wavelength and are small in magnitude, producing only minor distortions to the image. 

Optical scattering from small particles redirects the light direction nearly uniformly, making backscattered light significant for samples containing dielectric particles that are small compared to the optical wavelength. However, for a random variation of RI, even for a short correlation length, the propagating field can be largely forward directed. The extent to which the field is scattered in the forward direction can be evaluated from the first moment of the distributions of scattering angles, $g = \langle \cos \theta \, \sigma_d(\theta) \, d \Omega \rangle/\mu_s$, where $\sigma_d(\theta)$ is the differential scattering cross section per unit volume in the random RI medium. The total scattering coefficient, $\mu_s = \int_\Omega \sigma_d(\theta) \, d \Omega$, gives the rate of scattering events with wave propagation. The polar angle, $\theta$, of the scattered field direction relative to the incident field is defined by $\cos \theta = \hat{\mathbf{s}}_o  \cdot \hat{\mathbf{s}}_i $, where  $\hat{\mathbf{s}}_i$ is the unit vector indicating the direction of the incident wave and $\hat{\mathbf{s}}_o$ is the unit vector along the direction of a scattered wave.

 
Experimental characterization of optical scattering reveals that the spatial variation in the RI in biological tissues can be described as a statistically stationary random field. \cite{jacques2008tutorial} The statistics of the RI exhibit fractal behavior, representing the wide range of the variation in the spatial scales of tissue fluctuations, from small strongly scattering organelles within the cytoplasm of cells to large structures within the cell body and the extracellular matrix. \cite{schmitt1996turbulent, schmitt1998optical, xu2005fractal, sheppard2007fractal, wu2007unified, rogers2013modeling} The size of the spatial variation of RI fluctuation ranges from $< 100$ nm to $>$ several $\mu$m. \cite{jacques2011fractal} The strength of the statistical fluctuations are characterized by the RI variance, which is typically on the order of $\Delta n^2 \sim (0.01)^2$, relative to a background RI of $n_0 \sim 1.367$. \cite{xu2011scattering} This weak RI fluctuation admits reliable modeling of light scattering in tissues with first-order scattering over volume in which single scattering events dominate.

The first Born approximation is widely used to model light scattering in the limit of single scattering events. For a scalar field in the Born approximation, the differential scattering cross section per unit volume for a spatially statistically stationary RI fluctuation is given by $ \sigma_d = 2 \, \pi \, k^4 \, \Phi_n(k_s)$. Here $k = 2 \, \pi /\lambda$ is the wavenumber and the scattering vector $\mathbf{k}_s = n_0 \, k \, (\hat{\mathbf{s}}_o - \hat{\mathbf{s}}_i)$ denotes the change in direction of the incident light that has a magnitude of $k_s = 2 \, n_0 \,  k \, \sin(\theta/2)$. \cite{ishimaru2017electromagnetic} The angular distribution of the light scattered by the random medium is related to the power spectrum of the spatial variation in the RI, $\Phi_n(k_s) = (2 \, \pi)^{-3} \, \int B_n(\mathbf{r}) \, \exp(- i \, 2 \, \pi \, \mathbf{k}_s \cdot \mathbf{r} )  \, d V $. This calculation is valid provided that the scattering length $\ell_s \, =\mu_s^{-1}$ is larger than the thickness, $L$, of the volume modeled. In addition, for the first Born approximation to be valid, it has been shown that we must meet the condition $\Delta n^2 \, (n_0 \, k \, L_0)^2 \gg 1$. \cite{ccapouglu2009accuracy, rogers2013modeling} 

Measurements of the random RI distribution of biological tissues are found to exhibit a fractal spatial structure that is well described by the Whittle-Matern (WM) correlation function.  \cite{rogers2013modeling}  A typical correlation length of $\ell_c \sim 500$ nm is used for tissues modeling. Wavelengths ranging from the visible to the near infrared lead to a correlation parameter $X = k \ell_c = 2 \, \pi \, (\ell_c/\lambda)$ in the range of $X \sim \pi – 2 \, \pi$. For representative tissue RI statistics, the product of the scattering coefficient and the correlation length can be written as $\mu_s \, \ell_c \approx 4 \, \Delta n^2 \, X^2 \, \left[ 1 - \left( 1 + 4 \, X^2\right)^{-2} \right]$.  \cite{xu2011scattering} For wave propagation to be well described by a forward scattering scalar model, we require weak backscattering and high anisotropy ($g$ close to one), which implies that $\Delta g = 1-g$ must be a small parameter. Using typical parameters for the WM correlation function, this forward-scattering parameter can be readily computed. As smaller region of RI fluctuations lead to a broader distribution of scattering direction, $\Delta g$ increases with decreasing $X$, but does not depend on the magnitude of the RI perturbation $\Delta n$. In the limiting case, as $X \rightarrow 0$, a typical value of $\Delta g \rightarrow 0.6$ is not strongly forward directed. For typical values of $X$, we find a range of $\Delta g \sim 0.05 – 0.1$. While single scattering events are largely forward directed when $\Delta g$ is small, after $\sim 1/\Delta g$ scattering events, the scattered light has lost all of the initial propagation direction of the input beam; this length scale is called this transport length, $\ell_t = \ell_s/\Delta g$, and beyond this length even highly anisotropic scattering media can no longer be described by forward-scattered light. \cite{carminati2021principles} 


Simulating wave propagation in strongly heterogeneous media over large distances compared to the wavelength of the electromagnetic field is an extremely challenging task. In the scalar case, this task amounts to numerically solving the three-dimensional Helmholtz equation, for which a very fine grid is necessary in order to achieve a reasonable accuracy. This results in large linear systems whose resolution comes at a considerable computational cost. In the context of imaging and inverse problems, where multiple systems have to be solved to, e.g., reconstruct the refractive index or image scatterers, a full simulation of wave propagation is often prohibitive and one has to resort to approximations. 

A classical simplification stems from the directional nature of the propagating wavefield, as in the case of optical imaging, allowing for neglecting backscattering and taking into account only forward propagation. This is how the so-called paraxial wave equation is derived, and is simply a time-dependent two-dimensional Schr\"odinger equation where the time variable is replaced by the variable along the direction of propagation. The full 3D problem is then reduced to a 2D evolution problem that is now numerically tractable. This wave scattering model is often numerically simulated using a split-step operator approach \cite{flatte1975calculation}, that is sequentially applied to slices of the 3D RI variation. In each slice of thickness $\Delta z$, propagation and scattering are decoupled by sequentially applying a phase distortion followed by propagation of the distorted wave through each slice. The phase accumulation due to the RI variation are modeled as a phase screen  $\phi(x,y) \approx k_0 \int_{\Delta z} n(x,y,z) \, d z$ that is localized to a plane in the slice, followed by propagation of the perturbed wave through the slice with a homogeneous medium equal to the uniform background refractive index. These steps are iteratively applied for each thin slice of the medium. 

This split operator approach has been used to study wave propagation through the ocean \cite{flatte1975calculation} and turbulent atmosphere \cite{fleck1976time, knepp1983multiple, spivack1989split, coles1995simulation, roggemann1995method, frehlich2000simulation}; to study turbulent propagation in biological tissues, \cite{schott2015characterization, glaser2016fractal} and birefringent random media \cite{mu2023multislice}; and to model integrated optical devices \cite{van1981beam} and x-ray scattering from nanostructures \cite{li2017multislice}. These models have been used for numerous studies beyond the initial applications for investigating propagation through atmospheric turbulence and astronomical adaptive optics. In the space of tissue optics, they have been used to study beam propagation in tissue-mimicking random RI fields \cite{glaser2016fractal, cheng2019development} and to study the memory effect and correlations in anisotropy scattering media. \cite{schott2015characterization, judkewitz2015translation, arjmand2021three} In addition, the fast computational speed of split-step propagator models has made them useful for solving inverse scattering imaging problems. \cite{maiden2012ptychographic, tian2014multiplexed, kamilov2016optical, chowdhury2019high, wang2021large}

The main issue with such a model is the loss of accuracy away from the axis of propagation because of the approximations intrinsic to the model. This limitation arises from the use of a thin phase plate, which implies small angle waves incident on the slice and the calculation intrinsically neglects any scattering within $\Delta z$. In addition, simplification of the angular spectral propagator with the Fresnel approximation further limits the simulation to the paraxial approximation and neglects contributions of evanescent waves. Unfortunately, these restrictions fail to capture important physics of light propagation in tissues. In particular, the recognition that the RI fluctuation in tissue displays fractal behavior means that there are strong contributions to optical scattering from small structures such as vesicles, mitochondria, peroxisomes, and lysosomes, all of which have a length scale much smaller than the wavelength. These small structures will couple evanescent fields to far-field radiation when in close proximity, which is a frequent occurrence in tissue. As a result, current approaches to large-scale modeling of tissues with split-step operator approaches likely fail to capture some of the scattering properties of tissues.

Several approaches have been developed to resolve these shortcomings. One of those is the multi-layer Born (MLB) method introduced in \cite{chen2020multi}. The main idea is to use an integral formulation of the Helmholtz equation in which backscattering is neglected, and to slice the heterogeneous medium along the axis of propagation into layers with a width small compared to either the wavelength or the typical oscillation length of the background, whichever is shorter. This condition on the layers' width is necessary to achieve a decent accuracy. This results in a very simple numerical scheme whose computational cost is inversely proportional to the layers' width, and which consists in evaluating iteratively 2D convolutions of the form, for $z>0$, 
\be \label{deff}
f(x)=\int_{\Rm^2} G(x-y,z) U(y) dy, \qquad x \in \Rm^2,
\ee
where $G$ is the Green's function and $z$ the direction of propagation, (below $\hat G$ denotes the partial Fourier transform of $G$ with respect to the transverse variable $x=(x_1,x_2) \in \Rm^2$ with $|x|=\sqrt{x_1^2+x_2^2}$),
\be \label{Green}
G(x,z)=\frac{e^{i k_0 \sqrt{|x|^2+z^2}}}{4 \pi \sqrt{|x|^2+z^2}}, \qquad \hat G(\xi,z)= \frac{i}{8 \pi^2} \frac{e^{i \sqrt{k_0^2- |\xi|^2} |z|}}{\sqrt{k_0^2- |\xi|^2}},
\ee
and where $U$ is the product of the local field in a given layer and the perturbation of the refractive index due to the heterogeneous medium ($U$ is denoted by $VU$ further, for $V$ the perturbation and $U$ the field). In \cite{chen2020multi}, $f$ is computed using the fast Fourier transform (FFT) algorithm based on the following form of $f$ in the Fourier space:
\be \label{Ff}
f(x)=\frac{1}{(2\pi)^2}\int_{\Rm^2} e^{i \xi \cdot x }\hat G(\xi) \hat U(\xi) d \xi.
\ee
This requires the FFT of $U$, and then an inverse FFT to get $f$.

The problem is actually not as simple as it appears, since $\hat G$ is singular at $|\xi|=k_0$, and it is well known that Fourier-based techniques do not offer high accuracy with singular functions \cite{T-Spectral-00}. The original MLB method avoids the issue of singularity by introducing a cut-off, preventing $|\xi|$ from getting too close to $k_0$. From a physical perspective, this means that only propagating modes are calculated and that the evanescent modes are all discarded. While ignoring the evanescent field would lead to an accurate approximation in a classical far-field scattering problem where propagation occurs in free space, here this unfortunately introduces large errors because the field is always in a near-field regime. Indeed, since the width of the layer has to be small compared to the wavelength for an accurate approximation of the integral equation, scattering is in a near-field regime along $z$ and evanescent modes cannot be neglected. This statement has to be nuanced when the beam enters the scattering medium since it is still well focused, but as it propagates, the beam widens and the transfer from propagating to evanescent modes increases, which in turn enhances backconversion from evanescent to propagating modes. We show theoretically in this work (see Appendix \ref{appev}) that, even in a transverse far-field regime where the norm of $x$ is large compared to the wavelength and $z$ is of the order of the wavelength, the contribution of the evanescent wave is as large as that of the propagating wave provided $\hat U(\xi)$ has a non-zero contribution around $|\xi|=k_0$. This is not surprising since, again, we are in a near-field regime along the axis of propagation.

This previous discussion suggests that evanescent modes have to be included in the simulation for accurate results, which in turn requires careful handling of the calculation of $f$. We propose in this article two methods, both based on the FFT for minimal computational cost. One consists in working with the real-space representation of $f$ and by computing the discrete Fourier transforms (DFT) of $G$ and $U$. The resulting DFT of $G$ is then an effective regularization of the exact $\hat G$ that includes the evanescent modes. Another natural approach consists in ``fixing'' the original MLB formulation based on \fref{Ff} by regularizing directly $\hat G$ and considering a complex-valued wavenumber. We perform a rigorous error analysis that shows that the first method has better accuracy than the second one for a comparable computational cost. Another benefit of the error analysis is to provide us with an optimal way to choose the discretization parameters. 

This strategy solves the evanescent wave problem. A remaining issue is the large computational cost when the layer width is small as more iterations along $z$ have to be performed. We then propose higher-order schemes meant to alleviate the cost based on classical high-order quadrature formulas for numerical integration. We derive a second-order method exploiting the midpoint rule, and a fourth-order method based on Milne quadrature.  

The paper is structured as follows: we define the setting and our two methods for the calculation of \fref{deff} in Section \ref{secnum} and state our error estimates. We present the high-order methods in Section \ref{high}. Our technical results are given in Appendix: in Appendix \ref{deriv}, we recall how to derive the MLB scheme from the Helmholtz equation; we state our two convergence theorems in Appendix \ref{conv}, and detail their proofs in Appendices \ref{proofth1} and \ref{proofth2}. We finally substantiate in Appendix \ref{appev} the claim made in introduction that propagating and evanescent modes have similar amplitudes in a transverse far-field regime.

 \paragraph{Acknowledgments.} This work was funded by grants 2023-336437 and 2024-337798 from the Chan Zuckerberg Initiative DAF, an advised fund of Silicon Valley Community Foundation, and by NSF grant DMS-2404785.

\section{Numerical methods} \label{secnum}

\paragraph{Setting.} The starting point is the three-dimensional Helmholtz equation equipped with Sommerfeld radiation conditions, 
  $$
\Delta U+k^2(x,z) U=S, \qquad \textrm{with} \qquad \partial_r (U-U_{\rm{i}})=ik_0(U-U_{\rm{i}})+O(1/r^2) \qquad \textrm{as} \quad r \to \infty,
$$
where $x=(x_1,x_2) \in \Rm^2$, $r=\sqrt{|x|^2+z^2}$, and the wavenumber reads $k^2(x,z)=(\omega/c)^2 n(x,z)^2$, for $\omega$ the angular frequency, $c$ the speed of light, and $n$ the medium refractive index. The incoming field $U_{\textrm{i}}$ satisfies 
$$
\Delta U_{\rm{i}}+k_0^2U_{\rm{i}}=S.
$$
The constant background index is $n_0$ with $k_0=(\omega/c)n_0$, and let the scattering potential be defined by $V(x,z)=(\omega/c)^2 (n(x,z)^2-n_0^2)$. We suppose that $V$ is supported in a bounded domain $\Omega$. Above, $S$ is a source located outside of $\Omega$. Let the scattered field be defined by $U_{\rm{s}}=U-U_{\rm{i}}$. Calculations given in Appendix \ref{deriv} show, when backscattering is neglected, that $U_{\rm{s}}$ can be approximated by, when $\Delta z >0$,
\bea \nonumber
U_{\rm{s}}(x,z+\Delta z)& \simeq & 
\int_{\Rm^2}  \int_{z}^{z+\Delta z} G(x-x',z+\Delta z-z') \left[V (U_{\rm{i}}+U_{\rm{s}})\right](x',z') dx' dz'\\
&&+P_{\Delta z} U_{\rm{s}}(x,z), \label{startMLB}
\eea
where $P_{\Delta z}$ is the angular propagator defined by
\be \label{angular}P_{\Delta z} \varphi(x,z)=\frac{1}{(2 \pi)^2}\int_{\Rm^2} e^{i \sqrt{k_0^2-|\xi|^2}|\Delta z|} e^{i \xi \cdot x}\hat{\varphi}(\xi,z) d\xi. \ee
Above, the complex square root has positive imaginary part, and we chose the convention $\hat \varphi(\xi)=\int_{\Rm^2} e^{-i \xi \cdot x} \varphi(x)dx$ for the Fourier transform.

Equation \ref{startMLB} is the starting point of MLB and extensions. In the standard MLB, the integral over $z$ is simply approximated by the method of rectangles, resulting in the explicit scheme
\be \label{MLB1}
U_{\rm{s}}(x,z+\Delta z) \simeq \Delta z
\int_{\Rm^2}  G(x-x',\Delta z) \left[V (U_{\rm{i}}+U_{\rm{s}})\right](x',z) dx'+P_{\Delta z} U_{\rm{s}}(x,z),
\ee
which provides us with an iterative formula. 
The key to the method is then to find an efficient and accurate way to compute the convolution with the Green's function.

We propose two methods for this in the next two sections, and perform a rigorous error analysis.
\paragraph{Method 1.} We begin with the $f$ defined in \fref{deff}, in which $U$ now plays the role of the term $V (U_{\rm{i}}+U_{\rm{s}})$. For simplicity, we replace $\Delta z$ by $z>0$ when going from \fref{MLB1} to \fref{deff}, and denote from now on for $z$ fixed $G(x) \equiv G(x,z)$ since $z$ is a fixed parameter. The length $z$ models the width of the layers, and has to satisfy $z \ll \ell=\min(\lambda,\ell_c)$,  where $\lambda = 2\pi /k_0$ is the central wavelength and $\ell_c$ is the typical oscillation length of $V (U_{\rm{i}}+U_{\rm{s}})$, for the expression \fref{MLB1} to yield an accurate approximation of the integral in $z$.



For some length $L>0$, denote by $S_L=[-L/2,L/2]\times [-L/2,L/2]$ the square of side $L$. We choose $L$ sufficiently large so that the support of $U$ is included in $S_L$. We need first to periodize $f$ to compute its Fourier series. For this, we restrict $f$ to $S_L$ and extend it periodically to a function denoted $f_L$. For $x \in S_L$, we have then $f_L(x)=f(x)$, and for $i=1,2$, $f_L(x\pm L e_i)=f_L(x\pm L e_1\pm L e_2)=f(x)$ where $e_1=(1,0)$ and $e_2=(0,1)$. Note that since the function $G$ is even, the periodization we chose does not introduce discontinuities at the boundary of $S_L$ and there is no need for a more involved periodization.

The Fourier series of $f_L$ reads
$$
f_L(x)=\sum_{m,n \in \Zm} \hat f_{m,n} e^{i k_{m,n} \cdot x }, \qquad k_{m,n}=2\pi (m,n)/L,
$$
where the Fourier coefficients are
$$
\hat f_{m,n}= \frac{1}{L^2} \int_{S_L} f(x) e^{-i k_{m,n} \cdot x } dx.
$$
Exploiting the convolution in the definition of $f$ yields
$$
\hat f_{m,n}= L^2  \hat G_{m,n} \hat U_{m,n},
$$
for $\hat G_{m,n}$ and  $\hat U_{m,n}$ the Fourier coefficients of $G$ and $U$, namely
$$
\hat G_{m,n}= \frac{1}{L^2} \int_{S_L} G(x) e^{-i k_{m,n} \cdot x } dx, \qquad \hat U_{m,n}= \frac{1}{L^2} \int_{S_L} U(x) e^{-i k_{m,n} \cdot x } dx.
$$

What can be computed efficiently with the fast Fourier transform algorithm are the DFT of $G$ and $U$, and subsequently an inverse DFT to recover an approximation of $f$. This is formalized as follows. Let $N \in \NN$, chosen to be odd without lack of generality, and denote by $I_N$ the set of integers $I_N=\{-(N-1)/2,\cdots,(N-1)/2\}$ and by $I^c_N=\Zm^2 \backslash I_N \times I_N$ the complement of $I_N \times I_N$ in $\Zm^2$. For $h=L/N$, the set of points $\{x_{p,q}\}_{p,q \in I_N}$, where $x_{p,q}= (p h,qh)$, is a regular discretization of $S_L$. The DFT of $G$ is 
$$
 \hat G^N_{m,n}=\frac{1}{N^2}\sum_{p,q \in I_N} G(x_{p,q})e^{-i k_{m,n} \cdot x_{p,q} }, 
$$
and a similar formula holds for the DFT of $U$ denoted $ \hat U^N_{m,n}$. After an inverse DFT, what we compute numerically is
$$
f_{L,N}(x)= L^2 \sum_{m,n \in I_N} \hat G^N_{m,n} \hat U^N_{m,n}e^{i k_{m,n} \cdot x }.
$$
The goal is to characterize the error between $f$ and $f_{L,N}$ for $x \in S_L$. There are three sources of errors: error in the approximation of the Fourier coefficients of $G$ by its DFT, a similar error for $U$, and the error due to the truncation of the Fourier series of $f_L$. There are two parameters we can control: $L$ and $N$, which determine the size of the domain and the number of discretization points.

We derive an error estimate of $f-f_{L,N}$ in Appendices \ref{conv} and \ref{proofth1}. We express $N$ as $N=N_0 N_\lambda$, where $N_0 \geq 0$, $N_\lambda=[L \lambda^{-1}]$ and $[\cdot]$ denotes integer part. We expect $N_\lambda \gg 1$ in most practical situations.


Note that we do not expect a high accuracy since $G$ is doubly singular: it does not decay fast enough to zero to be integrable, which inevitably introduces errors when periodizing; and it is singular at the origin; the parameter $z$ then acts as a regularization. 

For $u$ a function defined on $S_L$, let
$$
\|u\|^2=\frac{1}{N^2}\sum_{p,q \in I_N}|u(x_{p,q})|^2,
$$
which is an approximation of the average value of $|u|^2$ on $S_L$. Below, the notation $a \lesssim b$ means that there exists a non-dimensional constant $C>0$, independent of the parameters of the problem, such that $a \leq C b$. With $\hat U(k_{m,n})=L^2 \hat U_{m,n}$ and $\hat U^N(k_{m,n})=L^2 \hat U^N_{m,n}$, we obtain the following result, see Appendix \ref{conv},
\be \label{errorM1}
\frac{\|f-f_{L,N}\|}{f_{\textrm{ref}}} \lesssim \frac{1}{N_0^2 N_\lambda}+\frac{\max_{m,n \in I_N} |\hat U(k_{m,n})-\hat U^N(k_{m,n})|}{\|U\|_{L^1(S_L)}},
\ee
where $f_{\textrm{ref}}$ is some reference value for $f$ defined in Appendix \ref{conv} and $\|U\|_{L^1(S_L)}=\int_{S_L} |U(x)|dx$. The first term $1/N_0^2 N_\lambda$ combines the truncation error of the Fourier modes and the DFT error on $G$. It is optimal since all symmetries in $G$ have been exploited to obtain the best possible decay. The second term is the DFT error on $U$, which varies depending on how smooth $U$ is. Note that the estimate is uniform in the parameters $k_0$, $\ell$ and $L$, which is crucial since some constants in non uniform estimates can be large, in particular $k_0L \gg 1$.   

If the perturbation $V$ presents sharp edges, we can only expect a slow decay of the DFT error on $U=V(U_s+ U_{\rm{i}})$ of the form
$$
\frac{\max_{m,n \in I_N} |\hat U(k_{m,n})-\hat U^N(k_{m,n})|}{\|U\|_{L^1(S_L)}} \lesssim \frac{\ell_s}{L N}+\frac{\ell_s^2}{\ell^2 N^2}, 
$$
where $\ell_s$ is the diameter of the transverse support of $V$ in one layer. 
When the edges are smooth, when $U$ is smooth, and when the support of $V$ is large, the accuracy is limited by the oscillations of $U_s+ U_{\rm{i}}$ and $V$, and we have estimates of the form
$$
\frac{\max_{m,n \in I_N} |\hat U(k_{m,n})-\hat U^N(k_{m,n})|}{\|U\|_{L^1(S_L)}} \lesssim \left(\frac{L \ell^{-1}}{N} \right)^q, 
$$
for $q \geq 1$ depending on how smooth $U$ is. When $\lambda$ is of the order of $\ell$, which occurs when the random fluctuations and the incoming field have comparable wavelengths, then $L \ell^{-1} \sim N_\lambda$ and the error above is of order $(1/N_0)^q$, uniformly in the parameters.

We now turn to the second method, which is a modification of the original MLB scheme.

\paragraph{Method 2.} The starting point is to write $f$ as an inverse Fourier transform:

$$
f(x)=\frac{1 }{(2 \pi)^2} \int_{\Rm^2} \hat G(\xi) \hat U(\xi) e^{i \xi \cdot x}d\xi, \qquad \textrm{where} \quad 
\hat G(\xi)= \frac{i}{8 \pi^2}\frac{e^{i \sqrt{k_0^2- |\xi|^2} |z|}}{\sqrt{k_0^2- |\xi|^2}},
$$
and the complex square root has a positive imaginary part. The main issue in the formula above is the singularity of $\hat G(\xi)$ when $|\xi|=k_0$, since a regular grid of wavenumbers $\{k_{m,n}\}_{m,n \in \Zm}$ could have points arbitrarily close to $k_0$ leading to large numerical errors. It is therefore necessary to regularize $\hat G$, and a natural way to proceed is to add an artificial absorption $\eta > 0$ resulting in a complex-valued wavenumber $k_0(\eta)=\sqrt{k_0^2+i\eta^2}$ with positive imaginary part. We then consider the regularized function $f_\eta$, obtained by replacing $G(x)$ by $G_\eta(x)=e^{i k_0(\eta)\sqrt{|x|^2+z^2}}(4 \pi\sqrt{|x|^2+z^2})^{-1}$ in the definition of $f$. We recall that the original MLB method only considers wavenumbers $\xi$ such that $|\xi| < k_0$, and therefore ignores the evanescent modes. We recall that we show in Appendix \ref{appev} that propagating and evanescent modes have a similar amplitude in a transverse far-field regime.

The first step to obtain a computable expression is to truncate the integral over $\xi$: with $K=2 \pi N/L$ and $S_K=[-K/2,K/2]\times [-K/2,K/2]$, let
$$
f_K(x)=\frac{1 }{(2 \pi)^2} \int_{S_K} \hat G_\eta(\xi) \hat U(\xi) e^{i \xi \cdot x}d\xi.
$$
We choose $K > k_0$ in order to account for the evanescent field. Setting $h=L/N$ as in the previous section, let
$$
C_{p,q}=\frac{1}{K^2}\int_{S_K} \hat G_\eta(\xi) \hat U(\xi) e^{i \xi \cdot x_{p,q}}d\xi, \qquad x_{p,q}=h(p,q), \qquad p,q\in I_N,
$$
so that $f_K(x_{p,q})= K^2/(2\pi)^2 C_{p,q}$. The coefficients $C_{-p,-q}$ are the Fourier coefficients of the function $\hat G_\eta(\xi) \hat U(\xi)$.
Let
$$
\hat{U}^N(\xi)=\frac{L^2}{N^2} \sum_{p,q \in I_N} U(x_{p,q}) e^{-i \xi \cdot x_{p,q}},
$$
where $\hat{U}^N_{m,n}=\hat{U}^N(k_{m,n})/L^2$ is the DFT of $U$ which can be computed efficiently. We replace $\hat U$ by its DFT in $C_{p,q}$, giving
$$
\widetilde{C}^N_{p,q}=\frac{1}{K^2}\int_{S_K} \hat G_\eta(\xi) \hat U^N(\xi) e^{i \xi \cdot x_{p,q}}d\xi, \qquad x_{p,q}=h(p,q), \qquad p,q\in I_N,
$$
and we set
$$
\widetilde{f}_{K,N}(x_{p,q})=\frac{K^2}{(2 \pi)^2}\widetilde{C}_{p,q}^N.
$$
The Fourier coefficient $\widetilde{C}_{p,q}^N$ is then approximated by its DFT given by
$$
C_{p,q}^N=\frac{1}{N^2}  \sum_{m,n \in I_N} \hat G_\eta(k_{m,n}) \hat U^N(k_{m,n})e^{i k_{m,n} \cdot x_{p,q} }=\frac{L^2}{N^2}  \sum_{m,n \in I_N} \hat G_\eta(k_{m,n}) \hat U_{m,n}^Ne^{i k_{m,n} \cdot x_{p,q} },
$$
where $k_{m,n}= 2 \pi/h (m,n)$ as in the previous section. Collecting all results, we find finally that the function $f_\eta$ is approximated at $x_{p,q}$ by
$$
f_{K,N}(x_{p,q})=\frac{ K^2}{(2 \pi)^2} C_{p,q}^N= \sum_{m,n \in I_N} \hat G_\eta(k_{m,n}) \hat U^N_{m,n}e^{i k_{m,n} \cdot x_{p,q} },
$$
which can be calculated with an inverse DFT. With this construction, there are four sources of error: the regularization error after the introduction of $k_0(\eta)$, the truncation error, and two DFT errors when computing the DFT approximating the Fourier coefficients of $\hat G_\eta \hat U^N$ and of $\hat U$.

There are three free parameters in this discretization (we recall that the layer width $z$ is fixed): $L$, which controls the grid size $2 \pi /L$ in the wavenumber space, the parameter $K$ (or equivalently $N$ since $K= 2 \pi N/L$), characterizing the maximal wavenumber accounted for, and the regularization parameter $\eta$. The theoretical analysis given in Appendix \ref{conv} shows it is beneficial to choose $\eta=k_0 (k_0 L)^{-\frac{1}{3+\eps}}$, for an $\eps>0$ arbitrarily small, and $N=M N_\lambda$ where $N_\lambda$ is the same as in method 1. The parameter $M$ is such that the function $U$ does not contain wavenumbers greater than $M k_0$ (or equivalently, that the contribution of wavenumbers larger than $M k_0$ is negligible). Theorem B.2 shows it is enough to consider $M \lesssim (\lambda/\ell) N_z$ since larger wavenumbers are exponentially damped, for $N_z$ the number of discretization points in the propagation direction $z$. 

In order to compare with method 1, we set $N_0=M$, so that methods 1 and 2 have the same number of discretization points. We recall that $L$ is chosen so that the support of $U$ is enclosed in $S_L$. Under assumptions on $U$ that are verified in our setting of interest, we derive in Appendix \ref{conv} the estimate
\be \label{errorM2}
\frac{\|f-f_{K,N}\|}{f_{\textrm{ref}}} \lesssim \frac{1}{N_\lambda^{\frac{2}{3+\eps}}}+\frac{\max_{k \in S_K} |\hat U(k)-\hat U^N(k)|}{\|U\|_{L^1(S_L)}},
\ee
where $\eps>0$ is arbitrarily small but not zero. We remark first that the DFT error on $U$ is similar to that of method 1, which is expected. The other errors are overall of order $1/N_\lambda^{2/(3+\eps)}$, which is nearly optimal (we expect the optimal rate to be $1/N_\lambda^{2/3} \log N_\lambda$, which can likely be obtained at the price of a more involved analysis). That rate has to be compared with $1/(N_\lambda M^2)$ of method 1, that is significantly better. In particular, the performance of method 1 vs method 2 increases as $M$ gets larger and $U$ contains higher wavenumbers. Note that the computational cost of method 1 is essentially the same as that of method 2 since it only requires in addition the calculation of the DFT of $G$, which can be performed at the beginning of the iterations once and for all. The conclusion is therefore that method 1 offers a better accuracy than method 2 for a comparable cost. 

\medskip

From now on, we only consider method 1 for calculation of the convolution in \fref{deff}. In the case of the first-order MLB \fref{MLB1}, the approximation error of the $z$ integral is of order $\Delta z (\Delta z \ell^{-1} )$ when $U$ is differentiable with respect to $z$, and therefore, after propagating over a distance $L_z$ with $L_z \Delta z^{-1}$ layers, the error is of order $N_z^{-1}=[z \ell^{-1}]$. This can be improved by deriving high-order versions of MLB. In a globally fourth-order discretization of the $z$ integral, the error is of order $N_z^{-4}$. High-order schemes are derived in the next section. 

The previous discussion applies to the case where $V$ is smooth in the $z$ variable, which occurs for instance when $V$ models smooth random perturbations. When $V$ models an inclusion such as a sphere with a different refractive index than the background, then $V$ presents sharp transitions and the quadrature error in one layer is of order $\Delta z \Delta V \ell_V$, where $\Delta V$ quantifies the jump in refractive index and $\ell_V$ is the length of the boundary of the transverse support of $V$ in the layer. After passing through all layers, the error is $\Delta z \Delta V$ times the surface of the scattering object. In that situation, there is no interest in moving to high-order schemes since the error due to the jump singularity will dominate.

  \section{High-order MLB} \label{high}

  The starting point is expression \fref{startMLB}. Different schemes are obtained depending on the order of the quadrature rule used to approximate the integral with respect to $z$. The simplest one is the first-order MLB derived below. The calculation of the angular propagator $P_{\Delta z}$ is also necessary and is equivalent to computing the convolution \fref{deff} with $G$ replaced by $\partial_z G$. It is natural to use the Fourier representation of the propagator since the kernel $e^{i \sqrt{k_0^2-|\xi|^2}|\Delta z|}$, contrary to that of $G$, does not exhibit the singularity at $k_0$. Now, a closer look shows that the Fourier transform of the scattered field $U_{\textrm{s}}$ to be propagated by $P_{\Delta z}$ and originating from the convolution in expression \fref{MLB1}, is proportional to $\hat G$, which has the singularity. Hence, the evaluation of $P_{\Delta z} U_{\textrm{s}}$ is essentially similar to that of the convolution in \fref{MLB1} since now $\hat U_{\textrm{s}}$ is singular, and we expect a comparable accuracy as in the calculation of \fref{deff}.

  Note that the methods described in the previous section used to estimate the errors can be adapted to the calculation of $P_{\Delta z} U_{\textrm{s}}$, and in addition to the DFT error due to $U_{\textrm{s}}$, method 1 has an error of order $N_z/(M^2 N_\lambda)$, while that of method 2 is now $1/N_\lambda^{3/2}$. Method 1 becomes worse because of the increased singularity around zero in $\partial_z G$, while in the Fourier space, method 2 improves since $e^{i \sqrt{k_0^2-|\xi|^2}|\Delta z|}$ has one more derivative than $\hat G$ with respect to $\xi$. In practical situations, we expect $N_\lambda$ to be significantly larger than $M$ and $N_z$, which makes method 2 more accurate and is our preferred choice here. As already mentioned, the dominant error term in the calculation of $P_{\Delta z} U_{\textrm{s}}$ is due to the DFT error on $U_{\textrm{s}}$, and this quantity is obtained in the previous section with method 1.
      
  We start with a modified version of first-order MLB.
  
\paragraph{First-order MLB.}
The integral with respect to $z$. in \fref{startMLB} is approximated using the rectangle rule, leading to
$$
\Delta z \int_{\Rm^2}  G(x-x',\Delta z ) \left[V (U_{\textrm{i}}+U_{\textrm{s}})\right](x',\Delta z) dx'+O\left(\Delta z \frac{\Delta z}{\ell}\right),
$$
where we recall that $\ell=\min(\lambda,\ell_c)$ and we use the Landau notation $O(\cdot)$. Denoting by $\star$ the convolution in the $x$ variable (evaluated numerically using method 1), and setting $G_z(x)=G(x,z)$, the first order scheme reads for $n \geq 0$,
\bee
U^{n+1}_{\rm{s}}(x)&=& 
\Delta z G_{\Delta z} \star \left[V^n (U^n_{\rm{i}}+U^n_{\textrm{s}})\right]+P_{\Delta z} U^n_{\rm{s}}(x),
\eee
with initialization $U_s^0=0$. Above, we have set $V^n=V(z_n,x)$, $U^n_{\rm{i}}=U_{\rm{i}}(z_n,x)$, where $z_n$ denotes the entry position of the $n$-th layer, and $U^{n+1}_{\rm{s}}(x)$ is an approximation of the scattered field in this $n$-th layer. The term $P_{\Delta z} U^n_{\rm{s}}$ is computed using method 2 without the regularization $\eta$ since it is not necessary here. 


\paragraph{Second-order MLB.} 
We now use a second-order quadrature rule for the integral. We must be careful in doing so since the standard trapezoidal rule involves the end points $0$ and $\Delta z$ and therefore the function $G(x,0)$, which is singular at $|x|=0$. We then use the midpoint rule to approximate the integral, which is of order two and does not require the endpoints. We have then
\begin{align*}
  & \int_{0}^{\Delta z} G(x-x',\Delta z-z') \left[V (U_{\rm{i}}+U_{\rm{s}})\right](x',z'+z) dz'\\
  &=\Delta z \; G\left(x-x',\frac{\Delta z}{2}\right) \left[V (U_{\rm{i}}+U_{\rm{s}})\right]\left(x',z+\frac{\Delta z}{2}\right)+O\left(\Delta z \left(\frac{\Delta z}{\ell}\right)^2\right),
\end{align*}
and we use first order MLB to get $U_{\rm{s}}\left(x',z_n+\frac{\Delta z}{2}\right)$:
\bee
U_{\rm{s}}\left(x,z_n+\frac{\Delta z}{2}\right)&= &
\frac{\Delta z}{2} \; G_{\Delta z/2} \star [V^n (U^n_{\rm{i}}+U^n_{\rm{s}})]+P_{\Delta z/2}U_{\rm{s}}^n+O\left(\Delta z \frac{\Delta z}{\ell}\right).
\eee
The second-order scheme then reads, for $n \geq 0$,
\bee
U^{n+1}_{\rm{s}}&=& 
\Delta z \; G_{\Delta z/2} \star [V^{n+\frac{1}{2}} (U^{n+\frac{1}{2}}_{\rm{i}}+U^{n+\frac{1}{2}}_{\rm{s}})]+P_{\Delta z} U_s^n
\eee
where $U_{\rm{s}}^0=0$ and
\bee
U_{\rm{s}}^{n+\frac{1}{2}}&=&\frac{\Delta z}{2}\; G_{\Delta z/2} \star [V^n (U^n_{\rm{i}}+U^n_{\rm{s}})]+P_{\Delta z/2} U_{\rm{s}}^n, \qquad n \geq 0.
\eee

One iteration of second-order MLB involves two additional convolutions to compute compared to first order MLB, for a total of four convolutions.

\paragraph{Fourth-order MLB.} As in the second-order case, we cannot use a Simpson type formula of order four since it involves the end points. We instead employ Milne quadrature rule, which is a globally fourth order open Newton-Cotes type method to approximate the integral. The Simpson rule would have lead to the standard Runge-Kutta 4 algorithm (RK4), while the Milne rule yields a modified version of RK4. 

We first rewrite the scattered field as 
$$
U_{s}\left(z+\Delta z\right)=U_1\left(z+\Delta z\right)+U_2\left(z+\Delta z\right),
$$
where
\bee
U_1\left(z+\Delta z\right)&=&P_{\Delta z} U_s(z)\\
 U_2\left(z+\Delta z\right)&=&\int_{0}^{\Delta z}\int_{\Rm^2 }G(x-x',\Delta z-z') \left[V (U_{\rm{i}}+U_1+U_2)\right](x',z+z')dz'  dx'.
\eee

The Milne quadrature then yields
\bee
U_2\left(z+\Delta z\right)&=&\frac{2 \Delta z}{3}G_{3 \Delta z/4} \star \left[V (U_{\rm{i}}+U_1+U_2)\right](\Delta z/4)\\
&&-\frac{\Delta z}{3}G_{\Delta z/2} \star \left[V (U_{\rm{i}}+U_1+U_2)\right](\Delta z/2)\\
&&+\frac{2 \Delta z}{3}G_{\Delta z/4} \star \left[V (U_{\rm{i}}+U_1+U_2)\right](3\Delta z/4)+O\left(\Delta z \left(\frac{\Delta z}{\ell}\right)^4\right).
\eee
There are two points to address in the formula above. First, $U_2$ is not known at the locations $\Delta z/4$, $\Delta z/2$ and $3\Delta z/4$. We then use explicit approximations of $U_2$ at these points in the spirit of RK4 that preserve the overall order of accuracy. Second, $U_1$ needs to be computed at $\Delta z/4$, $\Delta z/2$ and $3\Delta z/4$, which is done as in the second-order case by using the propagator $P_{\Delta z}$. 
Let
\bee
H^n(t,X)&=&G_{\Delta z(1-t)} \star \left[V^{n+t} (U^{n+t}_{\rm{i}}+U^{n+t}_1+X)\right] \qquad n \geq 0.
\eee
The fourth order scheme then reads, for $n \geq 0$:
\bee
U_s^{n+1}&=&U_{1}^{n+1}+U_{2}^{n+1}\\
U_{1}^{n+t} &=& P_{t \Delta z} U_\textrm{s}^{n}, \qquad U_{1}^{0}=0\\
U_{2}^{n+1} &=&\frac{\Delta z}{3} (2 h^n_1-h^n_2+2h^n_3),
\eee
where
\bee
 h_1^n&=&H^n\left(\frac{1}{4},X^n_1\right), \qquad X^n_1=\frac{\Delta z}{4} G_{\Delta z/4} \star \left[V^{n} (U^{n}_{\rm{i}}+U^{n}_1)\right] \\ 
 h^n_2&=&H^n\left(\frac{1}{2},X_2^n\right), \qquad X^n_2=\frac{\Delta z}{4} G_{\Delta z/4} \star \left[V^{n+\frac{1}{4}} (U^{n+\frac{1}{4}}_{\rm{i}}+U^{n+\frac{1}{4}}_1+X^n_1)\right]\\
 h^n_3&=&H^n\left(\frac{3}{4},X_3^n\right), \qquad X^n_3=\frac{\Delta z}{4} G_{\Delta z/4} \star \left[V^{n+\frac{1}{2}} (U^{n+\frac{1}{2}}_{\rm{i}}+U^{n+\frac{1}{2}}_1+X^n_2)\right].\\
 \eee

 Computing $X_1^n, X_2^n,X_3^n$ requires 5 convolutions, and $h_1^n, h_2^n, h^n_3$ requires 4 convolution, so that computing $U_{2}^{n+1}$ involves 9 convolutions. In total, $U_{\textrm{s}}^{n+1}$ requires then 10 convolutions,
 that is 8 more than the first-order MLB. 

 The higher-order schemes, whether second- or fourth-order, require $V (U_{\rm{i}}+U_{s})$ be smooth since the quadrature error depends on the $z$ derivative of $G \star V (U_{\rm{i}}+U_{s})$ (third order derivative for second-order MLB and fifth-order for fourth-order MLB). In particular, if the $n$-th $z$ derivative of $G \star V (U_{\rm{i}}+U_{s})$ behaves like $\Delta z^{-n}$, higher-order schemes will not be useful. We do not have this issue here since, while $G$ is not itself regular with respect to $z$, a $z$ derivative can be exchanged for an $x$ derivative due to the particular form of $G$. More precisely, we write, with the shorthand $U=V (U_{\rm{i}}+U_{s})$,
 \begin{align*} \nonumber 
   \int_{\Rm^2} &G(x',z-z') U(x-x',z') dx'\\
&=\int_{\Rm^2} G(x',z-z') (U(x-x',z')-U(x,z')) dx'+U(x,z')\int_{\Rm^2} G(x',z-z') dx' ,
\end{align*}
and realize that, when $z-z'>0$,
$$
\int_{\Rm^2} G(x',z-z') dx'=\frac{i}{8 \pi^2 k_0} e^{i k_0 (z-z')} 
$$
which is smooth with respect to $z$. Then, since 
$$
U(x-x',z')-U(x,z')\simeq -x' \cdot \nabla_x U(x,z'),
$$
the $x'$ before the gradient on the r.h.s.  compensates for the singularity arising when differentiating $G$ with respect to $z$. Hence, a $z$ derivative on $G$ can be exchanged for an $x$ derivative on $U$. The same idea applies to high-order derivatives.

 \section{Conclusion} We proposed in this work two remedies for the evanescent mode issue of the original MLB. We showed with a rigorous error analysis that the one based on a direct computation of the DFT of the Green's function provides us with a natural regularization and yields better error estimates than the one based on regularizing by adding an artificial absorption. The mathematical analysis also provides us with an optimal way to choose the discretization parameters.

 We have moreover derived high-order numerical schemes based on high-order quadrature rules for numerical integration. These methods theoretically offer a lower computation cost for a given accuracy than lower-order schemes, and they are applicable when the background perturbations are smooth, as for instance in wave propagation in random smooth media. A numerical validation of these algorithms will be the object of a future work.

 An aspect that is not covered in this article is the stability of the numerical scheme, namely what is the condition of the slab width $\Delta z$ for the iterates of the field to remain bounded. Runge-Kutta type methods for ordinary differential equations are conditionally stable, i.e. $\Delta z$ has be small enough for the solution to not blow up, but in general the stability criteria are not too strict. We expect the same to hold here since our schemes are variations of Runge-Kutta methods.

 \begin{appendix}

\section{Derivation of the MLB scheme} \label{deriv}
The starting point is the three-dimensional Helmholtz equation and we use notations introduced in Section \ref{secnum}. Since the Helmholtz equation can be recast as 
$$
\Delta (U-U_{\rm{i}})+k^2_0 (U-U_{\rm{i}})+VU=0,
$$
we can express $U$ with the integral equation
$$
U(x,z)=U_{\rm{i}}(x,z)+ \int_{\Rm^2} \int_{\Rm} G(x-x,z-z') (V U)(x',z') dx' dz',
$$
where $G$ is the Green's function defined in \fref{Green}. It is more convenient to write the integral equation only in terms of the scattered field $U_{\rm{s}}=U-U_{\rm{i}}$, i.e.
$$
U_{s}(x,z)= \int_{\Rm^2} \int_\Rm G(x-x',z-z') \left[V (U_{\rm{i}}+U_{s})\right](x',z') dx' dz'.
$$
We suppose the incoming wave is propagating along the positive $z$ axis and then separate the forward wave and backscattering as follows:
$$
U_{s}(x,z)=\int_{\Rm^2} \int_{-\infty}^z G(x-x',z-z') \left[V (U_{\rm{i}}+U_{s})\right](x',z') dx' dz'+B_{sc}(x,z),
$$
where
$$
B_{sc}(x,z)=\int_{z}^{\infty} G(x-x',z-z') \left[V (U_{\rm{i}}+U_{s})\right](x',z') dx' dz'.
$$
The goal now is to write the field at  $z+\Delta z$ in terms of the one at $z$. We begin with
\bee
U_{s}(x,z+\Delta z)&=& \int_{\Rm^2} \int_{-\infty}^{z+\Delta z} G(x-x',z+\Delta z-z') \left[V (U_{\rm{i}}+U_{s})\right](x',z') dx' dz'\\
&&+B_{sc}(x,z+\Delta z)\\
&=&  \int_{\Rm^2} \int_{z}^{z+\Delta z} G(x-x',z+\Delta z-z') \left[V (U_{\rm{i}}+U_{s})\right](x',z') dx_\perp' dz'\\
&&+\int_{\Rm^2} \int_{-\infty}^{z} G(x-x',z+\Delta z-z') \left[V (U_{\rm{i}}+U_{s})\right](x',z') dx' dz'\\
&&+B_{sc}(x,z+\Delta z).
\eee
In order to get a recursive formula, we use the angular propagator $P_{\Delta z}$ defined in \fref{angular} and observe that
$$
P_{\Delta z} G(x-x',z-z')=G(x-x',|z-z'|+|\Delta z|).
$$
When $z>z'$ and $\Delta z >0$, this gives $G(x-x',z-z'+\Delta z)=P_{\Delta z} G(x-x',z-z')$, and it follows that
\bee
U_{s}(x,z+\Delta z)&=& 
\int_{\Rm^2}  \int_{z}^{z+\Delta z} G(x-x',z+\Delta z-z') \left[V (U_{\rm{i}}+U_{s})\right](x',z') dx' dz'\\
&&+P_{\Delta z} U_{\rm{s}}(x,z)-P_{\Delta z} B_{sc}(x,z)+ B_{sc}(x,z+\Delta z).
\eee
At this stage there is no approximation and the integral equation above is exact. A computationally tractable equation is obtained by exploiting the directionality of the incoming field and by neglecting backscattering, giving the expression
\bee 
U_{s}(x,z+\Delta z)& \simeq & 
\int_{\Rm^2}  \int_{z}^{z+\Delta z} G(x-x',z+\Delta z-z') \left[V (U_{\rm{i}}+U_{s})\right](x',z') dx' dz'\\
&&+P_{\Delta z} U_{s}(x,z), 
\eee
which is the starting point of MLB and extensions. 

\section{Convergence results} \label{conv}
We use the notations of Section \ref{secnum}. Consider the non-dimensional constants
\bee
c_1&=& (k_0z)^{-1}+k_0 L\\
c_2&=&L k_0^{-1}z^{-2}+k_0^2 L^2.
\eee
The next theorem, proved in Section \ref{proofth1}, concerns the convergence of method 1. We purposedly keep track of the dependency of various constants on important parameters of the problem such as $k_0$, $L$ and $z$ since these constants are large in relevant regimes.

\begin{theorem} \label{th1} We have the estimate
  $$
\|f-f_{L,N}\| \lesssim E_{\textrm{trunc+DFT-G}}+E_{\textrm{DFT-U}}.
$$
where, for all $\eps \in (0,1)$,
\bee
E_{\textrm{trunc+DFT-G}} &=&k_0\big(L^2\|U\|+L\|U\|_{L^2(S_L)}\big) \left( \frac{c_1}{N^2}+\frac{c_1^{1-\eps}c_2^\eps}{N^{2(1+\eps)}}\right)\\
E_{\textrm{DFT-U}}&=& L \|G\|_{L^2(S_L)} \max_{m,n \in I_N} |\hat U_{m,n}-\hat U^N_{m,n}|.
\eee
\end{theorem}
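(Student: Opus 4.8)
The plan is to pass to the frequency domain and reduce the whole estimate to decay bounds for the Fourier coefficients $\hat G_{m,n}$. First I would observe that the discrete norm $\|\cdot\|$ only probes values at the grid points $x_{p,q}\in S_L$, where the true convolution $f$ coincides with its periodization $f_L$; hence $\|f-f_{L,N}\|=\|f_L-f_{L,N}\|$ and, by the discrete Parseval identity, $\|f_L-f_{L,N}\|^2=\sum_{m,n\in I_N}|\widehat{(f_L-f_{L,N})}^{\,N}_{m,n}|^2$, where $\widehat{\;\cdot\;}^{\,N}$ is the DFT of the grid samples. The DFT of the samples of the periodic function $f_L$ is given by the aliasing formula $\widehat{(f_L)}^{\,N}_{m,n}=\sum_{a,b\in\Zm}\hat f_{\kappa_{a,b}}$ with $\kappa_{a,b}=(m+aN,n+bN)$; using the stated identity $\hat f_{m,n}=L^2\hat G_{m,n}\hat U_{m,n}$ together with the analogous aliasing relations $\hat G^N_{m,n}=\sum_{a,b}\hat G_{\kappa_{a,b}}$ and $\hat U^N_{m,n}=\sum_{a,b}\hat U_{\kappa_{a,b}}$, I would arrive at the exact expression for the error coefficient
\[
\widehat{(f_L-f_{L,N})}^{\,N}_{m,n}=L^2\Big(\sum_{a,b\in\Zm}\hat G_{\kappa_{a,b}}\hat U_{\kappa_{a,b}}-\hat G^N_{m,n}\hat U^N_{m,n}\Big).
\]

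The next step is to separate the two advertised contributions. Writing $\hat U^N_{m,n}=\hat U_{m,n}+(\hat U^N_{m,n}-\hat U_{m,n})$ and substituting $\hat G^N_{m,n}=\sum_{a,b}\hat G_{\kappa_{a,b}}$, the error coefficient splits as
\[
L^2\hat G^N_{m,n}\big(\hat U_{m,n}-\hat U^N_{m,n}\big)+L^2\!\!\sum_{(a,b)\neq(0,0)}\!\!\hat G_{\kappa_{a,b}}\big(\hat U_{\kappa_{a,b}}-\hat U_{m,n}\big).
\]
For the first group I would use an $\ell^2$--$\ell^\infty$ Hölder bound together with the discrete Parseval identity $\sum_{m,n\in I_N}|\hat G^N_{m,n}|^2=\|G\|^2$ and the Riemann-sum comparison $\|G\|\approx L^{-1}\|G\|_{L^2(S_L)}$; this produces exactly $E_{\textrm{DFT-U}}=L\|G\|_{L^2(S_L)}\max_{m,n\in I_N}|\hat U_{m,n}-\hat U^N_{m,n}|$. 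The second group is the genuinely $G$-driven error: pairing the $\hat G$-tails against $\hat U$ by Cauchy--Schwarz, and using the Fourier Parseval identity $\sum_{m,n\in\Zm}|\hat U_{m,n}|^2=L^{-2}\|U\|_{L^2(S_L)}^2$ as well as the grid Parseval identity for $U$, reduces it to the $\ell^2$ tail sum of $|\hat G_{m,n}|$ over $I^c_N$ and the alias sum $\max_{m,n}|\hat G^N_{m,n}-\hat G_{m,n}|$, multiplied by the two norms $L^2\|U\|$ and $L\|U\|_{L^2(S_L)}$ appearing in $E_{\textrm{trunc+DFT-G}}$.

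The heart of the proof, and the main obstacle, is the sharp estimate of $\hat G_{m,n}$ with explicit dependence on $k_0$, $L$ and $z$. I would write $\hat G_{m,n}=L^{-2}\hat G(k_{m,n})-L^{-2}\int_{\Rm^2\setminus S_L}G(x)e^{-ik_{m,n}\cdot x}\,dx$. The bulk term $\hat G(k_{m,n})$ is explicit and, thanks to the regularization by $z>0$, decays exponentially like $e^{-|k_{m,n}|z}$ once $|k_{m,n}|>k_0$, so it contributes harmlessly; the scale $z$ of the near-origin peak of $G$ is what produces the factor $(k_0 z)^{-1}$ inside $c_1$. The tail term is an oscillatory integral over the exterior of $S_L$, and here I would integrate by parts, crucially exploiting that the evenness of $G$ makes the periodization continuous so that no $1/|k|$ contribution survives and the leading decay is $1/|k_{m,n}|^2$, with a constant governed by the boundary values and normal derivatives of $G$ on $\partial S_L$ (the oscillation $e^{ik_0|x|}$ across a domain of size $L$ is the origin of the factor $k_0L$ in $c_1$). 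Summing these bounds over $I^c_N$ and over the alias cosets yields the $c_1/N^2$ term.

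A further integration by parts would formally give an estimate with constant $c_2$ and decay $1/N^4$, but the jump in the higher normal derivative of the periodization obstructs the integer gain; interpolating (Hölder) between the robust $c_1/N^2$ bound and this formal $c_2/N^4$ bound produces the fractional term $c_1^{1-\eps}c_2^{\eps}/N^{2(1+\eps)}$ with $\eps\in(0,1)$ not removable. Collecting the two groups then gives $E_{\textrm{trunc+DFT-G}}+E_{\textrm{DFT-U}}$. I expect all the remaining steps — the Parseval identities, the Cauchy--Schwarz pairings, and the Riemann-sum comparison between $\|G\|$ and $\|G\|_{L^2(S_L)}$ — to be routine; the delicate point is to carry the constants $c_1,c_2$ explicitly through the integration-by-parts estimates so that the final bound is uniform in $k_0$ and $L$, as the statement requires.
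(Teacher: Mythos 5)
Your frequency-domain setup (aliasing formula, splitting the error coefficient into a $G$-driven part and a $U$-DFT part) is sound and close in spirit to the paper's Steps 2--4, but there is a genuine gap in how you bound the first group. You pair the \emph{discrete} coefficients $\hat G^N_{m,n}$ with the error $\hat U_{m,n}-\hat U^N_{m,n}$, which by discrete Parseval forces the grid norm $\|G\|$ to appear, and you then invoke the ``Riemann-sum comparison'' $\|G\|\approx L^{-1}\|G\|_{L^2(S_L)}$. That comparison is false in the regime the theorem is designed for. The function $G$ has a $z$-regularized peak at the origin of height $|G(0)|=(4\pi z)^{-1}$ and width $\sim z$, so $\|G\|\geq N^{-1}|G(0)|\gtrsim (Nz)^{-1}$, whereas $L^{-1}\|G\|_{L^2(S_L)}\sim L^{-1}\big(\log(L/z)\big)^{1/2}$; their ratio is of order $h/\big(z\sqrt{\log(L/z)}\,\big)$ with $h=L/N$ the grid spacing. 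Since MLB requires $z\ll\lambda$ while $h\sim\lambda/N_0$, one has $h\gg z$ in practice, and the ratio blows up as $z\to 0$. The theorem's estimate must be uniform in $z$ (this is exactly why $c_1,c_2$ carry the factors $(k_0z)^{-1}$ and $z^{-2}$), so your bound would degrade $E_{\textrm{DFT-U}}$ by an unbounded factor. The paper avoids this by orienting the splitting the other way: it pairs the \emph{exact} Fourier coefficients $\hat G_{m,n}$ with $\hat U_{m,n}-\hat U^N_{m,n}$ and uses continuous Parseval, $\sum_{m,n\in\Zm}|\hat G_{m,n}|^2=L^{-2}\|G\|^2_{L^2(S_L)}$, while the $G$-DFT error $\max_{m,n}|\hat G_{m,n}-\hat G^N_{m,n}|$ is paired with $\hat U^N_{m,n}$ (discrete Parseval, giving $\|U\|$). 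Your decomposition could be repaired by writing $\hat G^N_{m,n}=\hat G_{m,n}+(\hat G^N_{m,n}-\hat G_{m,n})$ in the first group and absorbing the cross term into the $G$-error part, but as written the step fails.

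A second, related problem lies in your route to the decay of $\hat G_{m,n}$. You split $\hat G_{m,n}=L^{-2}\hat G(k_{m,n})-L^{-2}\int_{\Rm^2\setminus S_L}G(x)e^{-ik_{m,n}\cdot x}\,dx$ and estimate the two pieces separately. But $\hat G(\xi)$ blows up like $|k_0^2-|\xi|^2|^{-1/2}$ on the circle $|\xi|=k_0$, and grid points $k_{m,n}$ can fall arbitrarily close to that circle; the exterior integral must blow up correspondingly so that the two singularities cancel (the coefficients $\hat G_{m,n}$ themselves are bounded by $L^{-1}$). Estimating the pieces separately therefore cannot give bounds uniform in $(m,n)$ --- handling this singularity is precisely the difficulty Method~1 is built to sidestep. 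The paper's Lemma~\ref{lemG} instead integrates by parts directly over $S_L$, exploiting the evenness of $G$ and bounding all boundary and interior terms in absolute value (which is where $z^{-1}$ and $Lz^{-2}$ enter $C_1$ and $C_2$); this never sees the circle singularity and yields the bounds $|\hat G_{m,n}|\lesssim\min\big(L^{-1},\,C_1 m^{-2},\,C_1 n^{-2},\,C_2|m|^{-3/2}|n|^{-3/2}\big)$ that drive the whole proof.
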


The term $E_{\textrm{trunc+DFT-G}}$ combines the error due to the truncation of the Fourier series of $f$ and the error due to  the approximation of the Fourier coefficients of $G$ by the DFT. The second term, $E_{\textrm{DFT-U}}$, is the error due to the approximation of the Fourier coefficients of $U$ by the DFT. We observe that $E_{\textrm{trunc+DFT-G}}$ decays quadratically in $N$, and therefore the error decreases as can be expected as the grid get finer. The quadratic decay is optimal as all symmetries of $G$ were exploited in the estimate.

The constants $c_1$ and $c_2$ are actually large since $k_0 L \gg 1$ and $z \ll \lambda$, as we expect oscillations over multiple wavelengths across the transverse domain. The parameters $N$ and $L$ have then to be chosen accordingly. Let first $N_z=[\ell z^{-1}]$ and $N_\lambda=[L \lambda^{-1}]$, where $[\cdot]$ denotes integer part. We have $N_z \gg 1$ and $N_\lambda \gg 1$. We have already assumed that $L$ is greater than the diameter of the support of $U$. Supposing for instance that $N_z=10$ for reasonable accuracy, we expect in practical situations that $N_\lambda \geq 10$, namely that the transverse support of $U$ is at least ten wavelengths wide, so that $(k_0 z)^{-1} \lesssim k_0 L$. The dominating term in $c_1$ is then $k_0L$.
%

Hence, $c_1 \lesssim N_\lambda$, and owing to the fact that $Lk_0^{-1} z^{-2}= Lk_0 (k_0 z)^{-2}$, we have  $c_2 \lesssim N_\lambda^3$. Setting $N=N_\lambda N_0$ and defining as reference
$$
f_{\textrm{ref}}=\max\Big(k_0\big(L^2\|U\|+L\|U\|_{L^2(S_L)}\big), L^{-1} (\|G\|_{L^2(S_L)}+\|\hat{G}\|_{L^2(S_K)} )\|U\|_{L^1(S_L)}\Big),
$$
we obtain \fref{errorM1}. This choice for the reference is  motivated by two facts: (i) the terms appearing in $f_{\textrm{ref}}$ are precisely the ones obtained in the estimates of Theorems 1 and 2 below, and (ii) the square of the second term is a direct estimate of the average value of $|f|^2$ over $S_L$.

In the scenario where $N_z \gg N_\lambda$, then the rate $1/(N_0^2N_\lambda)$ has to be changed to $ N_z/(N_0^2N_\lambda^2)$. This occurs when the support of $U$ is of the order of the wavelength or smaller, and models a localized scatterer. This is not the practical context we have in mind here since we are mostly interested in extended scattering media.

Regarding method 2, we will need the following quantity to state our main result: let
\begin{align} \nonumber 
  &k_U=\max \left(\frac{\max_{j=1,2} \| \partial_{\xi_j} \hat U^N\|_{L^1(S_K)}}{\max_{\xi \in S_K}  |\hat{U}^N(\xi)|},\frac{\max_{i,j=1,2} \| \partial^2_{\xi_i \xi_j} \hat U^N\|_{L^1(S_K)}}{L \max_{\xi \in S_K}  |\hat{U}^N(\xi)|}\right.\\\label{defKU}
  &\hspace{8cm}\left.\frac{\max_{i,j=1,2} \| \partial^2_{\xi_i^2 \xi_j} \hat U^N\|_{L^1(S_K)}}{L^2 \max_{\xi \in S_K}  |\hat{U}^N(\xi)|}\right),
\end{align}
which is a length quantifying how steep are the derivatives of $\hat U^N(\xi)$ after integration over $S_K$. 
We make the following assumptions that are not necessary but simplify expressions in the next theorem: $(k_0 L)^{-1} \lesssim \eta_0$ where $\eta^2=k_0^2 \eta_0$, and $\log(k_0 z) \leq \eta_0^{-1/2}$.
We have then the

\begin{theorem} \label{th2} The following estimate holds:
  \begin{align*}
    &\|f-f_{K,N}\| \lesssim \calE_{\textrm{reg}}+\calE_{\textrm{trunc}}+\calE_{\textrm{DFT}}+\calE_{\textrm{DFT-U}},
  \end{align*}
  where, for any $\eps \in (0,1)$,
  \bee
  \calE_{\textrm{reg}}  &=& \|U\|_{L^1(S_L)} k_0 \eta_0\\
  \calE_{\textrm{trunc}}  &=& \max_{|\xi| \geq K } |\hat{U}(\xi)| z^{-1} e^{-z \sqrt{K^2-k_0^2}}\\
   \calE_{\textrm{DFT}}&=& \|U\|_{L^1(S_L)} L^{-1} ( \alpha_U+\eta_0^{-1/2-\eps})\\
   \calE_{\textrm{DFT-U}}&=& L^{-1}\| \hat G\|_{L^2(S_K)} \max_{\xi \in S_K} |\hat U(\xi)-\hat U^N(\xi)|
   \eee
   and where $\alpha_U = k_U k_0^{-1} \eta_0^{-1/2}$.
 \end{theorem}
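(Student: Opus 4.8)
The plan is to split the total error along the four approximation steps that define $f_{K,N}$, inserting the intermediate functions $f_\eta$ (the exact convolution with the regularized kernel $G_\eta$), $f_K$ (its $\xi$-integral truncated to $S_K$), and $\widetilde f_{K,N}$ (in which $\hat U$ is replaced by its DFT $\hat U^N$ but the $\xi$-integral is still exact). On the grid this gives the telescoping identity
\[
f-f_{K,N}=(f-f_\eta)+(f_\eta-f_K)+(f_K-\widetilde f_{K,N})+(\widetilde f_{K,N}-f_{K,N}).
\]
Since $\|\cdot\|$ is the discrete $\ell^2$ average over the $x_{p,q}$ it is dominated by the supremum over grid points, so it suffices to bound each difference pointwise and to identify the four contributions with $\calE_{\textrm{reg}}$, $\calE_{\textrm{trunc}}$, $\calE_{\textrm{DFT-U}}$ and $\calE_{\textrm{DFT}}$ respectively.

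For $\calE_{\textrm{reg}}$ I would work in real space: $f(x)-f_\eta(x)=\int_{\Rm^2}\big(G(x-y,z)-G_\eta(x-y,z)\big)U(y)\,dy$, so $|f-f_\eta|\le\|U\|_{L^1(S_L)}\sup_{r\ge z}|G-G_\eta|$. Writing $k_0(\eta)=k_0+i\beta$ with $\beta\simeq k_0\eta_0/2$ and $r=\sqrt{|x|^2+z^2}$, the kernel difference is $(4\pi r)^{-1}e^{ik_0 r}(1-e^{-\beta r})$ up to a phase correction of order $k_0\eta_0^2$, whence $|G-G_\eta|\lesssim(4\pi r)^{-1}\min(\beta r,1)\lesssim k_0\eta_0$ uniformly in $r$, producing $\calE_{\textrm{reg}}$. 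For $\calE_{\textrm{trunc}}$ I would stay in Fourier space: for $|\xi|>K>k_0$ the regularized kernel obeys $|\hat G_\eta(\xi)|\lesssim(|\xi|^2-k_0^2)^{-1/2}e^{-z\sqrt{|\xi|^2-k_0^2}}$, so factoring out $\max_{|\xi|\ge K}|\hat U(\xi)|$ and integrating in polar coordinates reduces the tail to $\int_K^\infty e^{-z\sqrt{\rho^2-k_0^2}}\,d\rho\lesssim z^{-1}e^{-z\sqrt{K^2-k_0^2}}$, which is $\calE_{\textrm{trunc}}$. For $\calE_{\textrm{DFT-U}}$ the difference $f_K-\widetilde f_{K,N}$ equals $\tfrac{1}{(2\pi)^2}\int_{S_K}\hat G_\eta\,(\hat U-\hat U^N)\,e^{i\xi\cdot x_{p,q}}\,d\xi$; Cauchy--Schwarz against $\|\hat G_\eta\|_{L^2(S_K)}\simeq\|\hat G\|_{L^2(S_K)}$ and bounding $\hat U-\hat U^N$ by its supremum over $S_K$ yields $\calE_{\textrm{DFT-U}}$.

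The heart of the proof is $\calE_{\textrm{DFT}}$, the error of replacing the exact Fourier coefficient of $W:=\hat G_\eta\hat U^N$ on $S_K$ by its DFT on the grid $\{k_{m,n}\}$. I would use the aliasing (Poisson summation) identity expressing this error as a sum over nonzero aliases and bound it by $L^1(S_K)$ norms of the $\xi$-derivatives of $W$ of exactly the orders that enter the definition of $k_U$; the exponential smallness of $\hat G_\eta$ at $|\xi|=K$ makes the periodization effectively smooth at the boundary, so the aliasing is governed by interior derivatives. By the Leibniz rule these split into terms $\partial^{\,j}\hat G_\eta\,\partial^{\,s-j}\hat U^N$: the $\hat U^N$ factors are absorbed into $k_U\max_{S_K}|\hat U^N|$ (hence into $\|U\|_{L^1(S_L)}$), while the delicate ingredient is the kernel. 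Near the resonance $|\xi|=k_0$ the regularized $\hat G_\eta\propto(k_0(\eta)^2-|\xi|^2)^{-1/2}e^{i\sqrt{k_0(\eta)^2-|\xi|^2}\,z}$ is smooth but its derivatives blow up in $\eta_0$; the key computation is that, after integrating across the peak of width $\sim k_0\eta_0$, one has $\|\partial_\xi\hat G_\eta\|_{L^1(S_K)}\lesssim k_0^{-1}\eta_0^{-1/2}$, which is the common origin of both $\alpha_U=k_U k_0^{-1}\eta_0^{-1/2}$ (the mixed product terms) and the pure-kernel contribution.

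The main obstacle is to balance the differentiation order $s$ against the $\eta_0$-blow-up of $\partial^s\hat G_\eta$ and the aliasing gain coming from the grid spacing $2\pi/L$: using too few derivatives leaves an unsummable alias tail, while using too many over-amplifies the $\eta_0^{-1}$ factors concentrated at the peak. This is exactly where the two hypotheses are needed — $(k_0L)^{-1}\lesssim\eta_0$ guarantees the spacing $2\pi/L$ resolves the resonance of width $\sim k_0\eta_0$, and $\log(k_0z)\le\eta_0^{-1/2}$ absorbs the logarithmic factors generated by the oscillatory exponential in the derivative estimates. Optimizing the effective smoothness (interpolating between integer orders) then yields the near-optimal kernel rate $\eta_0^{-1/2-\eps}$ at the cost of the arbitrarily small loss $\eps>0$, and collecting the four bounds gives the stated estimate.
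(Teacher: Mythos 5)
Your proposal follows essentially the same route as the paper's proof: the identical four-term telescoping through $f_\eta$, $f_K$, $\widetilde f_{K,N}$, the real-space bound $|G-G_\eta|\lesssim k_0\eta_0$ for the regularization term, the same Fourier tail integral for the truncation term, Parseval/Cauchy--Schwarz for $\calE_{\textrm{DFT-U}}$, and, for the key $\calE_{\textrm{DFT}}$ term, the aliasing formula combined with integration by parts, $L^1(S_K)$ bounds on derivatives of $\hat G_\eta\hat U^N$ up to the orders entering $k_U$, and interpolation between integer orders to reach $\eta_0^{-1/2-\eps}$, with the two standing hypotheses used exactly where the paper uses them. One minor slip worth noting: the dimensionally correct statements are $\max_{\xi\in S_K}|\hat G_\eta(\xi)|\lesssim k_0^{-1}\eta_0^{-1/2}$ (the source of $\alpha_U$) and $\|\partial_{\xi_j}\hat G_\eta\|_{L^1(S_K)}\lesssim\eta_0^{-1/2-\eps}$ (the pure-kernel rate), rather than the hybrid you wrote, but since you invoke the correct final rates this does not affect the argument.
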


 As expected, the regularization error  $\calE_{\textrm{reg}}$ goes to zero as $\eta_0 \to 0$, and  $\calE_{\textrm{trunc}} \to 0$ as the truncation parameter $K \to + \infty$. The parameter controlling the DFT error is $L$, as the cell size in wavenumbers is $2 \pi /L$. Observe the term $\eta_0^{-1/2-\eps}$ which is reminiscent from the singular nature of $\hat G$. These estimates are nearly optimal, we believe the term $\eta_0^{-1/2-\eps}$ can be transformed into  $\eta_0^{-1/2} |\log \eta_0|$  at the price of more involved calculations. 

 We use now the previous theorem to set the parameters $\eta_0$, $K$ and $L$. Ignore for the moment $\alpha_U$ and set it to zero. We then choose $\eta_0$ such that $\calE_{\textrm{reg}}$ and $\calE_{\textrm{DFT}}$ are about the same order, that is we set $k_0 L \eta_0= \eta_0^{-1/2-\eps}$, which is $\eta_0 =(k_0 L)^{-\frac{2}{3+2\eps}} \ll 1$. Note that with such a $\eta_0$, the assumptions  $(k_0 L)^{-1} \lesssim \eta_0$ and $\log(k_0 z) \leq \eta_0^{-1/2}$ are satisfied.

 Concerning $\calE_{\textrm{trunc}}$, we assume that $\hat U$ does not have contributions for wavenumbers greater than some $ M  k_0$ (or equivalently that contributions for $|\xi| \geq M k_0$ are negligible), for some non-dimensional number $M$. We then set $K=M k_0$, and writing $K$ as $K= 2 \pi N/L$, this amounts to choose $N$ of the order of $M k_0 L$, that is $M N_\lambda$ with previous notation. In that case, $\calE_{\textrm{trunc}}$ vanishes or is negligible compared to the other errors. We then obtain the estimate \fref{errorM2}. This shows that the error $\|f-f_{K,N}\|$ is at best as \fref{errorM2} when $\alpha_U \neq 0$. Note that the exponential term in $\calE_{\textrm{trunc}}$ damps wavenumbers that are of the order of $\Delta z^{-1}$, so that it is enough to consider values of $M$ that are less than $(\lambda/\ell)N_z$ since larger wavenumbers are exponentially damped.  


 We can actually estimate $\alpha_U$ in our cases of interest. We first replace $\hat U^N$ by $\hat U$ since the former is an approximation of the latter and we expect the associated $\alpha_U$ to be comparable. We treat $U_{\textrm{i}}$ and $U_s$ separately. 
  
 In practice $U_{\textrm{i}}$ is often a plane wave, and as a consequence the Fourier transform of $V U_{\textrm{i}}$ is obtained by shifting that of $V$. When $V$ models an heterogeneous medium, it is usually obtained numerically as an inverse DFT, that is
$$
V(x)=\varphi(x/L_\varphi)\sum_{m,n \in I_N} R(k_{m,n}) e^{i k_{m,n} \cdot x },
$$
for some enveloppe function $\varphi$ defining the support of $V$ with diameter $L_\varphi$ and a function $R$ characterizing its the spatial frequency content. Its Fourier transform reads
$$
\hat V(\xi) = L_\varphi^2 \sum_{m,n \in I_N} R(k_{m,n}) \hat \varphi (L_\varphi(\xi-k_{m,n})).
$$
By construction $L_\varphi \leq L$, and a short calculation shows that $k_U$ is of the order of $L^{-1}_\varphi$. In that case, $\alpha_U$ is of the order of $L^{-1}_\varphi k_0^{-1} \eta_0^{-1}$. Supposing the incoming field oscillates multiple times over the support of $V$, we have $(k_0 L_\varphi)^{-1} \lesssim 1$ and $\alpha_U \lesssim n_0^{-1/2}$. This means that \fref{errorM2} is actually sharp.

The situation is slightly different for the term $V U_s$. Let us consider expression \fref{startMLB} and ignore the angular propagator part since it is more regular than the first term. We then write $U_s= G_z \star W$ for some $W$. The leading part in $\widehat{V U_{s}}$ is then given by the convolution of $\hat V$ and $\hat G \hat W$. 
We have $\| \partial_{\xi_j} \hat V \star \hat U_s \|_{L^1(S_K)} \sim \| \partial_{\xi_j} \hat V \|_{L^1(S_K)} \| \hat G_z  \hat W\|_{L^1(S_K)} \sim L_\varphi \| \hat G_z  \hat W\|_{L^1(S_K)}$. On the other hand, since $\max_{\xi \in S_K} |\hat V(\xi)| \sim L_\varphi^2$, we have $\max_{\xi \in S_K} |\hat V \star (\hat G_z W)| \sim L_\varphi^2 \| \hat G_z  \hat W\|_{L^1(S_K)}$. There are similar relations for higher-order derivatives, and in the end $k_U/k_0$ is of the order of $(k_0 L_\varphi)^{-1}\lesssim 1$. We have therefore $\alpha_U \lesssim n_0^{-1/2}$ as above. 

We proceed now to the proof of Theorem \ref{th1}.

 \section{Proof of Theorem \ref{th1}} \label{proofth1}

 The first step is to quantify the decay of the Fourier coefficients of $G$.

 \paragraph{Step 1: Estimates on $\hat G_{m,n}$.} We have the following lemma:

 \begin{lemma} \label{lemG} Let
   \bee
   C_0&=&L^{-1}, \qquad C_1= k_0+L^{-1}+z^{-1}+k_0^2 L\\
   C_2&=&k_0+L^{-1}+L z^{-2}+k_0^3 L^2.
   \eee
   Then, for all $m,n \in \Zm$,
   $$
   |\hat G_{m,n}| \lesssim \Big(C^{-1}_0+C_1^{-1}(m^2+n^2)+C_2^{-1}|n|^{3/2}|m|^{3/2}\Big)^{-1}.
   $$
 \end{lemma}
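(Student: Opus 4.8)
\emph{Strategy.} Since $z>0$ is fixed, the map $x\mapsto G(x)=e^{ik_0\rho}/(4\pi\rho)$, with $\rho=\rho(x)=\sqrt{|x|^2+z^2}\ge z$, is smooth on $\Rm^2$ and radial; in particular it is even in each variable $x_1,x_2$ separately, and every partial $\partial_{x_1}^a\partial_{x_2}^b G$ inherits the parity of its order in each variable. First I would record the pointwise bound $|\partial_{x_1}^a\partial_{x_2}^b G(x)|\lesssim \sum_{j=0}^{a+b} k_0^{\,j}\,\rho^{-(a+b+1-j)}$, obtained by noting that each derivative either brings down a factor $k_0$ (hitting the exponential) or lowers a power of $\rho$, together with the elementary integral estimates $\int_{S_L}\rho^{-1}dx\lesssim L$, $\int_{S_L}\rho^{-2}dx\lesssim \log(L/z)$, and $\int_{S_L}\rho^{-j}dx\lesssim z^{2-j}$ for $j\ge 3$. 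The plan is then to prove the three bounds
$$
|\hat G_{m,n}|\lesssim C_0,\qquad |\hat G_{m,n}|\lesssim \frac{C_1}{m^2+n^2},\qquad |\hat G_{m,n}|\lesssim \frac{C_2}{|m|^{3/2}|n|^{3/2}}
$$
separately, and to conclude by observing that the right-hand side of the lemma equals, up to a universal factor three, the minimum of $C_0$, $C_1/(m^2+n^2)$ and $C_2/(|m|^{3/2}|n|^{3/2})$, since $(a+b+c)^{-1}$ is comparable to $\min(a^{-1},b^{-1},c^{-1})$.

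\emph{Trivial and quadratic bounds.} The first bound is immediate from $|\hat G_{m,n}|\le L^{-2}\int_{S_L}|G|\lesssim L^{-2}\int_{S_L}\rho^{-1}dx\lesssim L^{-1}=C_0$. For the quadratic bound, by the symmetry $m\leftrightarrow n$ I may assume $m^2\ge n^2$, so $m^2+n^2\le 2m^2$ and it suffices to prove $|\hat G_{m,n}|\lesssim C_1/m^2$ (if $m=0$ then $n=0$, reducing to the trivial bound). Writing $\xi_1=2\pi m/L$ and integrating by parts twice in $x_1$, the \emph{first} boundary term vanishes because $G$ is even in $x_1$ while $e^{-i\xi_1 x_1}$ takes the same value $(-1)^m$ at $x_1=\pm L/2$; the \emph{second} integration by parts, applied now to the odd function $\partial_{x_1}G$, leaves a surviving boundary contribution $\propto \xi_1^{-2}\int_{-L/2}^{L/2}\partial_{x_1}G(L/2,x_2)\,e^{-i\xi_2 x_2}\,dx_2$ together with the volume term $\xi_1^{-2}\int_{S_L}\partial_{x_1}^2 G\,e^{-ik_{m,n}\cdot x}\,dx$. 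Using $\xi_1^{-2}=L^2/(4\pi^2 m^2)$ with the estimates above — the volume term producing the $z^{-1}$ and $k_0^2 L$ parts of $C_1$, and the boundary term (where $|x_1|=L/2$ and $\rho\ge L/2$) producing the $L^{-1}$ and $k_0$ parts — gives $|\hat G_{m,n}|\lesssim C_1/m^2$. This is exactly where all the parity of $G$ is used, and the loss at the second integration by parts is what caps the decay at the quadratic rate.

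\emph{Mixed bound and interpolation.} For the order-three term, assume $m,n\ne 0$ (otherwise $|m|^{3/2}|n|^{3/2}=0$ and nothing is required). Integrating by parts twice in $x_1$ and once in $x_2$ — the single $x_2$-step again loses no boundary term because, after the $x_1$-steps, both integrands $\partial_{x_1}^2 G$ and $\partial_{x_1}G(L/2,\cdot)$ are \emph{even} in $x_2$ — yields $|\hat G_{m,n}|\lesssim C_2/(m^2|n|)$, the prefactor being $L/(8\pi^3 m^2 n)$ and the third derivative $\partial_{x_1}^2\partial_{x_2}G$ producing, via $\int_{S_L}\rho^{-4}\lesssim z^{-2}$ and $\int_{S_L}\rho^{-1}\lesssim L$, the $Lz^{-2}$ and $k_0^3 L^2$ parts of $C_2$ (the $L^{-1}$ and $k_0$ parts coming from $\int_{-L/2}^{L/2}|\partial_{x_1}\partial_{x_2}G(L/2,x_2)|\,dx_2$). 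The symmetric computation gives $|\hat G_{m,n}|\lesssim C_2/(|m|n^2)$, and the geometric mean
$$
|\hat G_{m,n}|\lesssim\sqrt{\frac{C_2}{m^2|n|}\cdot\frac{C_2}{|m|\,n^2}}=\frac{C_2}{|m|^{3/2}|n|^{3/2}}
$$
furnishes the third bound; combining the three finishes the proof.

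\emph{Main obstacle.} The conceptual content is light — integration by parts, the parity bookkeeping, and one geometric-mean interpolation to manufacture the half-integer powers — so the real work is the accounting of constants. The hard part will be checking that every intermediate derivative/integral pair is genuinely dominated by $C_1$ or $C_2$: the cross terms are absorbed by AM–GM (e.g. $k_0^2 L\le\tfrac12(k_0+k_0^3L^2)$ and $k_0 L/z\le\tfrac12(Lz^{-2}+k_0^2L)$), while the logarithmic factors arising from $\int_{S_L}\rho^{-2}dx\lesssim\log(L/z)$ must be controlled in the relevant regime $k_0 z\ll1$, where $k_0\log(L/z)\lesssim z^{-1}$. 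The only genuinely fiddly point is to keep the boundary integrals on $\{x_1=L/2\}$ (where $\rho\ge L/2$ and $|x_1|=L/2$) separate from the volume integrals, so as to extract the $L^{-1}$-type contributions with the correct power of $L$.
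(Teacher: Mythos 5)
Your proof follows the paper's argument essentially step for step: the trivial $L^{-1}$ bound, two integrations by parts in $x_1$ with the evenness of $G$ killing the first boundary term and the surviving second boundary term supplying the $k_0+L^{-1}$ part of $C_1$, then the mixed third-order derivative with parity bookkeeping — and your geometric mean of $C_2/(m^2|n|)$ and $C_2/(|m|\,n^2)$ is exactly the paper's inequality $m^2|n|+n^2|m|\geq |m|^{3/2}|n|^{3/2}$ in disguise. One small caution: your aside $k_0\log(L/z)\lesssim z^{-1}$ is not literally true in general, but it is also unnecessary, since absorbing $k_0\rho^{-2}\le\tfrac12\big(\rho^{-3}+k_0^{2}\rho^{-1}\big)$ pointwise (as the paper does) avoids the logarithm altogether.
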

 The estimate above is optimal since we have exploited all symmetries in $G$ and there are boundary terms left that cannot be cancelled. \medskip
 
\begin{proof}
We recall that
$$
G(x)=\frac{e^{i k_0 \sqrt{|x|^2+z^2}}}{4 \pi \sqrt{|x|^2+z^2}}, \qquad \hat G_{m,n}=\frac{1}{L^2} \int_{S_L} G(x) e^{-i k_{m,n} \cdot x } dx.
$$
Denoting by $B_L$ the ball of radius $L$ centered at zero, we have first the following trivial bound
\bea \label{estG1}
|\hat G_{m,n}| &\leq& \frac{1}{L^2} \int_{B_L} |G(x)|dx \leq \frac{2 \pi }{L^2} \int_{0}^L \frac{r dr}{4 \pi r} \lesssim  L^{-1}.
\eea
We estimate now the partial derivatives of $G$. Let $r=\sqrt{|x|^2+z^2}=\sqrt{x_1^2+x_2^2+z^2}$. Then
$$
\partial_{x_1}G(x)=ik_0 x_1\frac{e^{i k_0 r}}{4 \pi r^2}-x_1\frac{e^{i k_0 r}}{4 \pi r^3}=x_1\big(ik_0-1/r\big)\frac{e^{i k_0 r}}{4 \pi r^2}
$$
and
\bee
\partial^2_{x_1^2}G(x)&=&\big(ik_0-1/r\big)\frac{e^{i k_0 r}}{4 \pi r^2}+x_1^2\frac{e^{i k_0 r}}{4 \pi r^5}+x_1\big(ik_0-1/r\big)\Big[ik_0 x_1\frac{e^{i k_0 r}}{4 \pi r^3}-x_1 \frac{e^{i k_0 r}}{2 \pi r^4}\Big]\\
&=&\big(ik_0-1/r\big)\frac{e^{i k_0 r}}{4 \pi r^2}+x_1^2\big(ik_0-1/r\big)^2\frac{e^{i k_0 r}}{4 \pi r^3}+x_1^2\frac{e^{i k_0 r}}{4 \pi r^5}.
\eee
Hence,
$$
|\partial_{x_1}G(x)| \lesssim k_0/r+1/r^2,\qquad |\partial^2_{x_1}G(x)| \lesssim r^{-3} +k_0^2/r,
$$
and similar calculations give
\be \label{dG2}
|\partial^2_{x_1 x_2}G(x)| \lesssim r^{-3} +k_0^2/r, \qquad |\partial^3_{x^2_1 x_2}G(x)|+|\partial^3_{x_1 x^2_2}G(x)| \lesssim r^{-4} +k_0^3/r.
\ee
Using the previous estimates, we can now control $\hat G_{m,n}$. We have first, after an integration by parts, since $G$ is even, 
$$
\hat G_{m,n}=\frac{1}{2 i \pi  L m} \int_{S_L} \partial_{x_1}G(x) e^{-i k_{m,n} \cdot x } dx.
$$
 Another integration by parts gives
 $$
 \hat G_{m,n}=\frac{1}{(2 \pi)^2 m^2}B - \frac{1}{(2 \pi)^2 m^2} \int_{S_L} \partial^2_{x_1}G(x) e^{-i k_{m,n} \cdot x } dx
 $$
 where the boundary term $B$ can be controlled by 
 $$
 |B| \lesssim k_0+L^{-1}.
 $$
 Hence,
 \be \label{estG2}
  |\hat G_{m,n}| \lesssim |m|^{-2}\big(k_0+L^{-1}+z^{-1}+k_0^2 L\big)=\frac{C_1}{m^2}.
  \ee
  The same calculation as above gives $|\hat G_{m,n}| \lesssim n^{-2}C_1 $. We treat now the mixed derivatives. We use for this the fact that $\partial_{x_1} G(x)$ is even with respect to $x_2$ and obtain
  $$
\hat G_{m,n}=\frac{1}{(2 i \pi )^2 m n} \int_{S_L} \partial^2_{x_1 x_2}G(x) e^{-i k_{m,n} \cdot x } dx.
$$
An integration by parts with respect to $x_1$ gives
 \be  \label{mixed}
 \hat G_{m,n}=\frac{1}{i(2 \pi)^3 m^2 n}B' +\frac{L}{i(2 \pi)^3 m^2 n} \int_{S_L} \partial^3_{x_1^2 x_2}G(x) e^{-i k_{m,n} \cdot x } dx
 \ee
 where the boundary term $B'$ can be controlled by, using \fref{dG2},
 $$
 |B'| \lesssim k_0^2 L+L^{-1}. 
 $$
 Using \fref{dG2} again, the second term in \fref{mixed}  is bounded by
 \bee
 \frac{L}{(2 \pi)^3 m^2 |n|}  \left(k_0^3 L+ \int_0^L \frac{r dr}{(r^2+z^2)^2} \right) &\leq& \frac{L}{(2 \pi)^3 m^2 |n|}  \left(k_0^3 L+ z^{-2}\int_0^\infty \frac{r dr}{(r^2+1)^2} \right)\\
 &\lesssim& \frac{1}{m^2 |n|}  \left(k_0^3 L^2+ L z^{-2}\right).
 \eee
An integration by parts with respect to $x_2$ instead yield a similar result with $m^2 n$ replaced by $m n^2$. Collecting the latter estimate and the one on $B'$, gives, using that $ 2k_0^2 L \leq k_0+ k_0^3L^2$,
  $$
  (m^2 |n|+n^2 |m|)|\hat G_{m,n}| \lesssim  \big(k_0+L^{-1}+Lz^{-2}+k_0^3 L^2\big)=C_2.$$
Since $m^2 |n|+n^2 |m|=|m||n| (|m|+|n|) \geq |m|^{3/2}|n|^{3/2}$, this ends the proof together with \fref{estG1} and \fref{estG2}.
\end{proof}

\medskip

We continue with the truncation error

 \paragraph{Step 2: truncation error.} Let $\widetilde{f}_{L,N}(x)$ be the function obtained by truncating the Fourier series of $f_L$:
$$
\widetilde{f}_{L,N}(x)=\sum_{m,n \in I_N} \hat{f}_{m,n} e^{i k_{m,n} \cdot x }.
$$
Using the fact that, for $m,n \in \Zm$,
$$
\sum_{p,q \in I_N} e^{i k_{m,n} \cdot x_{p,q}}=\frac{\sin(\pi m)\sin( \pi n)}{\sin(\pi m/N)\sin( \pi n/N)}=N^2 \delta (\textrm{mod}(m,N))\delta (\textrm{mod}(n,N)),
$$
where $\delta$ is the Dirac measure, we find
\bee
\|f_L-\widetilde{f}_{L,N}\|^2&=&\frac{1}{N^2}\sum_{p,q \in I_N}|(f_L-\widetilde{f}_{L,N})(x_{p,q})|^2\\
&=&\frac{1}{N^2}\sum_{p,q \in I_N}\sum_{(m,n),(m',n') \in I_N^c} e^{i (k_{m,n}-k_{m',n'}) \cdot x_{p,q}}\hat{f}_{m,n}\hat{f}^*_{m',n'}\\
&=&\sum_{(m,n) \in I_N^c}\sum_{(\ell,\ell') \in J_N(m,n)} \hat{f}_{m,n}\hat{f}^*_{m+ \ell N,n+\ell' N}
\eee
where $ J_N(m,n)=\{ \ell,\ell' \in \Zm: (m + \ell N,n+\ell' N) \in I^c_N\}$ and $I_N^c=\Zm^2 \backslash I_N \times I_N$. Writing $m=m'+ p N$, $n=n'+ p' N$, with $m',n' \in I_N$ and $(p,p') \in \Zm^2_*$, where $\Zm^2_*=\Zm^2 \backslash (0,0)$, the sum above can be recast as
$$
\sum_{m',n' \in I_N}\sum_{(p,p') \in \Zm^2_*} \sum_{(\ell+p,\ell'+p') \in \Zm^2_*}\hat{f}_{m'+pN,n+p'N}\hat{f}^*_{m'+ (\ell+p) N,n'+(\ell'+p') N}.
$$
Using the Cauchy-Schwarz inequality, this can be bounded by
$$
T=\left(\sum_{(p,p') \in \Zm^2_*} \left(\sum_{m',n' \in I_N} |\hat{f}_{m'+pN,n+p'N}|^2\right)^{1/2}\right)^2.
$$
The sum over $(p,p')$ can be split into 3 pieces: (i) $(p,p') \in \Zm_*\times \Zm_*=(\Zm_*)^2$, (ii) $(p=0,p')$ for $p'\in \Zm_*$, and (iii) $(p,p'=0)$ for $p\in \Zm_*$. We denote by $T_i$, $i=1,2,3$ the corresponding terms and start with $T_1$. Let $w(p,p')>0$ be such that $\sum_{(p,p') \in (\Zm_*)^2} (w(p,p'))^{-1} <\infty$.
Recalling that $
\hat f_{m,n}= L^2  \hat G_{m,n} \hat U_{m,n}$, the Cauchy-Schwarz inequality applied once more gives
\bee
T &\leq & L^4 M_N \sum_{(p,p') \in (\Zm_*)^2} (w(p,p'))^{-1} \sum_{(p,p') \in (\Zm_*)^2} \sum_{m',n' \in I_N} |\hat{U}_{m'+pN,n+p'N}|^2\lesssim L^4 M_N \sum_{m,n \in \Zm} |\hat{U}_{m,n}|^2
\eee
where
$$
M_N=\max_{(\ell,\ell') \in (\Zm_*)^2}  \max_{m,n \in I_N}  w(p,p') |\hat{G}_{m+ \ell N,n+\ell' N}|^2.
$$
According to Lemma \ref{lemG}, we have the estimate $ |\hat{G}_{m,n}| \lesssim C_1|m n|^{-1}$. Setting $w(p,p')=(1+|p|)^{3/2}(1+|p'|)^{3/2}$, it follows that
$M_N \lesssim (C_1N^{-2})^2$.
Since Parseval's inequality yields moreover
$$
\sum_{m,n \in \Zm}|\hat U_{m,n}|^2=\frac{1}{L^2} \int_{S_L} |U(x)|^2 dx=\frac{1}{L^2} \|U\|^2_{L^2(S_L)},
$$
we find
$$
T \leq L^2(C_1 N^{-2})^2\|U\|^2_{L^2(S_L)}.
$$
The terms $T_2$ and $T_3$ can be treated in a similar way, and the final estimate is
\be \label{trunc}
\|f_L-\widetilde{f}_{L,N}\| \lesssim  L \|U\|_{L^2(S_L)} C_1/N^2.
\ee


This gives the truncation error. The next step is to use once more the discrete Parseval equality to arrive at
$$
\|\widetilde{f}_{L,N}-f_{L,N}\|^2=L^4 \sum_{m,n \in I_N}|\hat G_{m,n}\hat U_{m,n}-\hat G_{m,n}^N\hat U^N_{m,n}|^2.
$$
The triangle inequality then gives
\bee
 L^{-2}\|\widetilde{f}_{L,N}-f_{L,N}\| &\leq& \left( \sum_{m,n \in I_N}|\hat G_{m,n}(\hat U_{m,n}-\hat U^N_{m,n})|^2\right)^{1/2}\\
&&+ \left( \sum_{m,n \in I_N}|\hat U^N_{m,n}(\hat G_{m,n}-\hat G^N_{m,n})|^2\right)^{1/2}.
\eee
Parserval's inequality yields
$$
\sum_{m,n \in I_N}|\hat G_{m,n}|^2 \leq \sum_{m,n \in \Zm}|\hat G_{m,n}|^2=\frac{1}{L^2} \int_{S_L} |G(x)|^2 dx=\frac{1}{L^2} \|G\|^2_{L^2(S_L)}
$$
and we have as well
$$
\sum_{m,n \in I_N}|\hat U^N_{m,n}|^2 =\frac{1}{N^2}\sum_{p,q \in I_N} |U(x_{p,q})|^2=\|U\|^2.
$$
Hence
\be \label{errorFT}
\|\widetilde{f}_{L,N}-f_{L,N}\| \lesssim  L^2 \left( L^{-1} \|G\|_{L^2(S_L)} \max_{m,n \in I_N} |\hat U_{m,n}-\hat U^N_{m,n}|+\|U\| \max_{m,n \in I_N} |\hat G_{m,n}-\hat G^N_{m,n}| \right).
\ee

 \paragraph{Step 3: the DFT error $\hat G_{m,n}-\hat G_{m,n}^N$.} 
We use the following aliasing formula (adapted to 2D), see \cite{gasquet2013fourier} Section 8.4, where the sum runs over $p$ and $q$ in $\Zm$,
   $$
 \hat G^N_{m,n}-\hat G_{m,n}=\sum_{(p,q) \neq (0,0)} \hat G_{m+pN,n+qN}, \qquad m,n \in I_N.
 $$
 We break the sum into 3 terms:
 \be \label{3T}
 \hat G^N_{m,n}-\hat G_{m,n}=\sum_{q \neq 0} \hat G_{m,n+qN}+\sum_{p \neq 0} \hat G_{m+pN,n}+\sum_{p \neq 0}\sum_{q \neq 0} \hat G_{m+pN,n+qN}.
 \ee
 Using Lemma \ref{lemG}, the first term is bounded by
 $$
 C_1\sum_{q \neq 0} \frac{1}{(n+qN)^2}=\frac{C_1}{N^2}\sum_{q \neq 0} \frac{1}{(\frac{n}{N}+q)^2} \lesssim \frac{C_1}{N^2},
 $$
 where we used that $|n/N|<1/2$ since $n \in I_N$. We obtain the same result for the second term in \fref{3T}. With Lemma \ref{lemG}, we control the third term in \fref{3T} by, for any $\eps \in (0,1)$,
 $$
  C^{1-\eps}_1 C^\eps_2\sum_{p \neq 0}\sum_{q \neq 0} \frac{1}{(n+qN)^{1+\eps}(m+pN)^{1+\eps}} \lesssim  \frac{C^{1-\eps}_1 C^\eps_2}{N^{2(1+\eps)}}.
  $$
  We introduced the parameter $\eps$ in order to minimize the power of the (large) constant $C_2$. 
 As a conclusion, we have, for all $m,n \in I_N$:
 \be \label{GDFT}
 |\hat G_{m,n}-\hat G_{m,n}^N| \lesssim \frac{C_1}{N^2}+\frac{C^{1-\eps}_1 C^\eps_2}{N^{2(1+\eps)}}, \qquad \forall \eps \in (0,1).
 \ee
 \paragraph{Step 4: Conclusion.} The triangle inequality gives
 $$
\|f_L-f_{L,N}\| \leq \|f_{L}-\widetilde{f}_{L,N}\| +\|f_{L,N}-\widetilde{f}_{L,N}\|.
$$
Using \fref{trunc}-\fref{errorFT}-\fref{GDFT} then yields
$$
\|f_{L,N}-\widetilde{f}_{L,N}\| \lesssim  L \|G\|_{L^2(S_L)} \max_{m,n \in I_N} |\hat U_{m,n}-\hat U^N_{m,n}|+ L^2 (\|U\|+L^{-1}\|U\|_{L^2(S_L)}) \left(\frac{C_1}{N^2}+\frac{C^{1-\eps}_1 C^\eps_2}{N^{2(1+\eps)}}\right),
$$
which ends the proof by noticing that $C_1 \lesssim k_0 c_1$ and $C_2 \lesssim k_0 c_2$ for $c_1$ and $c_2$ defined in Appendix \ref{conv}.
\section{Proof of Theorem \ref{th2}}  \label{proofth2}
We start with the error $f_\eta-f_{K,N}$ and split it into three terms using the triangle inequality:
\bee
\| f_\eta-f_{K,N} \| &\leq& \| f_\eta-f_{K}\|+\| f_{K}-\widetilde{f}_{K,N}\|+\| f_{K,N}-\widetilde{f}_{K,N}\|
\eee
where, using Parseval's equality in the second line,
\bee
\| f_\eta-f_{K}\| &\leq &\max_{p,q \in I_N} |(f-f_{K})(x_{p,q})|\\
\| f_{K}-\widetilde{f}_{K,N}\|^2&\leq &N^{-2}\sum_{p,q \in \Zm} |(f_{K}-\widetilde{f}_{K,N})(x_{p,q})|^2 =\frac{1}{L^2} \int_{S_K}|\hat G_\eta(\xi)|^2|\hat U(\xi)-\hat U^N(\xi)|^2 d\xi\\
&\leq & L^{-2} \|\hat G_\eta\|_{L^2(S_K)}^2 \max_{\xi \in S_K}|\hat U(\xi)-\hat U^N(\xi)|^2,
\eee
and
\bee
\| f_{K,N}-\widetilde{f}_{K,N}\| &\leq & \frac{ K^2}{(2 \pi)^2}\max_{p,q \in I_N} |C^N_{p,q}-\widetilde{C}_{p,q}^N|.
\eee
\paragraph{Step 1: the truncation error $f_\eta-f_K$.} First, owing to \fref{Csqrt}, and since the square root function is of H\"older regularity $1/2$, we have, for $|\xi| \geq k_0$,
$$
\big| \Im \sqrt{k_0^2(\eta)-|\xi|^2}-\sqrt{|\xi|^2-k_0^2}\big| \leq \eta/\sqrt{2}.
$$
Hence, 
\bee
|(f_\eta-f_K)(x_{p,q})|&\lesssim &  \max_{|\xi| \geq K} |\hat U(\xi)|\int_K^\infty \frac{e^{- z \Im \sqrt{k_0^2(\eta)-r^2}}}{\sqrt{r^2-k_0^2}} r dr \\
&\lesssim &  \max_{|\xi| \geq K} |\hat U(\xi)|\int_K^\infty \frac{e^{-z \sqrt{r^2-k_0^2}}}{\sqrt{r^2-k_0^2}} r dr \\
&\lesssim &   \max_{|\xi| \geq K} |\hat U(\xi)| \int_{\sqrt{K^2-k_0^2}}^\infty e^{-zu}du\\
&\lesssim  & \max_{|\xi| \geq K} |\hat U(\xi)| \frac{e^{-z\sqrt{K^2-k_0^2}}}{z}
\eee
This is small either when $\hat U$ has negligible contributions when $|\xi| \geq K$, or when $K^2 \geq k_0^2+\frac{a^2}{z^2}$, for some $a$ sufficiently large.

\paragraph{Step 2: the DFT error $C^N_{p,q}-\widetilde{C}_{p,q}^N$.} We can use the aliasing formula since the Green's function $\hat G_\eta$ is smooth thanks to the regularization. The main technicality in the analysis is to obtain (nearly) optimal estimates in the parameter $\eta$. 

Let 
\begin{align} \nonumber
\ell_U=\max &\left(\left(\frac{\max_{i,j=1,2} \max_{\xi \in S_K}| \partial^3_{\xi^2_j \xi_i} \hat U^N(\xi)|}{\max_{\xi \in S_K}  |\hat{U}^N(\xi)|}\right)^{1/3},\left(\frac{\max_{i,j=1,2} \max_{\xi \in S_K}| \partial^2_{\xi_i \xi_j} \hat U^N(\xi)|}{\max_{\xi \in S_K}  |\hat{U}^N(\xi)|}\right)^{1/2}\right.\\
&\hspace{6cm}\left.\frac{\max_{j=1,2} \max_{\xi \in S_K}| \partial_{\xi_j} \hat U^N(\xi)|}{\max_{\xi \in S_K}  |\hat{U}^N(\xi)|} \right). \label{defLU}
\end{align}

We have the following lemma:

\begin{lemma} \label{estH} Let $\eta_0=\eta^2/k_0^2$, $h=L/N$, and for $\ell_U$ and $k_U$ defined in \fref{defLU}-\fref{defKU}, let $\alpha_U=k_U k_0^{-1} \eta_0^{-1/2}$ and $\beta_U=\ell_U +k_Uk_0^{-2} \eta_0^{-3/2}+L k_U k_0^{-1} \eta_0^{-1/2}$, $\gamma_U=\ell_U^2+k_Uk_0^{-3} \eta_0^{-5/2}+L^2 k_U k_0^{-2} \eta_0^{-3/2}$. Define in addition, for any $\eps \in (0,1)$,
  \bee
  D_0&=&\alpha_U + \eta_0^{-1/2-\eps},\qquad  D_1=\beta_U+k_0^{-1} \eta_0^{-3/2-\eps}\\
  D_2&=&\gamma_U+k_0^{-2} \eta_0^{-5/2-\eps}.
  \eee
 Then, for $p,q \in \Zm$,
  \begin{align*}
    K^2 |\widetilde{C}_{p,q}^N| \leq \max_{\xi \in S_K}  |\hat{U}(\xi)|  \Big(k_0^{-2}z^{-1}+D_0^{-1}(|ph|+|qh|)+D_1^{-1}(|ph|^2+|qh|^{2}+|ph| |qh|)\\
    \hspace{3cm} +D_2^{-1}|ph|^{3/2} |qh|^{3/2}\Big)^{-1}.
 \end{align*}
\end{lemma}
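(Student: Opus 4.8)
The plan is to prove this exactly as the $\xi$-space analogue of Lemma~\ref{lemG}. Indeed, the coefficients $\widetilde{C}^N_{p,q}$ are, up to the reflection $(p,q)\mapsto(-p,-q)$, the Fourier coefficients on $S_K$ of the product $\Psi:=\hat G_\eta\,\hat U^N$, so their decay in $|ph|$ and $|qh|$ is governed by the smoothness of $\Psi$ in $\xi$, and the reciprocal-sum form of the bound simply records the best estimate obtained by integrating by parts $0$, $1$, $2$ or $3$ times in $\xi_1,\xi_2$. Using $\partial_{\xi_1}e^{i\xi\cdot x_{p,q}}=iph\,e^{i\xi\cdot x_{p,q}}$ and $\partial_{\xi_2}e^{i\xi\cdot x_{p,q}}=iqh\,e^{i\xi\cdot x_{p,q}}$, each integration by parts contributes a factor $1/|ph|$ or $1/|qh|$ together with a volume integral of a derivative of $\Psi$ and a boundary term on $\partial S_K$; bounding the volume integrals by $L^1(S_K)$-norms of derivatives of $\Psi$ produces the four regimes, and symmetrising the two third-order mixed estimates via $|ph|^2|qh|+|ph||qh|^2\ge|ph|^{3/2}|qh|^{3/2}$ — exactly as in Lemma~\ref{lemG} — yields the term $D_2^{-1}|ph|^{3/2}|qh|^{3/2}$. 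Assembling all regimes into a single reciprocal sum gives the stated inequality.

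First I would dispose of the zeroth-order term through the crude bound $K^2|\widetilde C^N_{p,q}|\le\max_{\xi\in S_K}|\hat U^N(\xi)|\,\|\hat G_\eta\|_{L^1(S_K)}$, the $L^1$-norm being finite and controlled by the evanescent decay $e^{-\sqrt{|\xi|^2-k_0^2}\,|z|}$ coming from $\Im w>0$, $w=\sqrt{k_0^2(\eta)-|\xi|^2}$; this furnishes the leading constant of the bracket. Next I would record pointwise bounds on the $\xi$-derivatives of $\hat G_\eta$, in the spirit of the explicit computation carried out for $G$ in Lemma~\ref{lemG}: the factor $e^{iw|z|}$ is bounded by $1$, and the $j$-th derivative is controlled by $|w|^{-(2j+1)}$ plus milder pieces carrying a factor $|z|$. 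The analytic heart is the lower bound $|w|\gtrsim\big((k_0^2-|\xi|^2)^2+\eta^4\big)^{1/4}$, so that $|w|\gtrsim\eta$ on the critical circle $|\xi|=k_0$; integrating $|w|^{-(2j+1)}$ over $S_K$ reduces to one-dimensional integrals $\int ds\,(s^2+\eta^4)^{-(2j+1)/4}$, giving $\|\partial^{(j)}_\xi\hat G_\eta\|_{L^1(S_K)}\lesssim k_0^{1-j}\eta_0^{-(2j-1)/2}$ up to the logarithm absorbed by $\eps$, which is precisely the $\eta_0$- and $k_0$-weight attached to $\hat G_\eta$ in $D_0,D_1,D_2$.

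The dependence on $\hat U^N$ then enters through the Leibniz rule: expanding $\partial^a_{\xi_1}\partial^b_{\xi_2}(\hat G_\eta\hat U^N)$ and estimating each summand by H\"older — either as $\|\partial\hat G_\eta\|_{L^1(S_K)}\max|\partial\hat U^N|$ or as $\max|\partial\hat G_\eta|\,\|\partial\hat U^N\|_{L^1(S_K)}$ — makes the pointwise derivative data of $\hat U^N$ (packaged in $\ell_U$, see \fref{defLU}) and its $L^1$ derivative data (packaged in $k_U$, see \fref{defKU}) combine with the $\eta_0$-weights of $\hat G_\eta$ to reproduce $\alpha_U$, $\beta_U$ and $\gamma_U$; for example $\max_{S_K}|\hat G_\eta|\,\|\partial_{\xi_1}\hat U^N\|_{L^1(S_K)}\lesssim(k_0\eta_0^{1/2})^{-1}k_U\max|\hat U^N|=\alpha_U\max|\hat U^N|$ is exactly the $\alpha_U$ in $D_0$. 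It then remains to control the boundary terms generated by the integrations by parts. On $\partial S_K$ one has $|\xi|\ge K/2>k_0$, so $\hat G_\eta$ and its derivatives carry the exponentially small factor $e^{-\sqrt{K^2/4-k_0^2}\,|z|}$; moreover, pairing opposite faces, the even symmetry of $\hat G_\eta$ together with the $K$-periodicity of $\hat U^N$ and of $e^{i\xi\cdot x_{p,q}}$ (valid because $Kh=2\pi$) makes the even-order pieces cancel outright, while the surviving odd-order pieces are exactly the exponentially small ones. Hence all boundary contributions are negligible against the volume estimates.

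The hard part will be extracting the \emph{nearly optimal} $\eta$-dependence of $\|\partial^{(j)}_\xi\hat G_\eta\|_{L^1(S_K)}$ (the second step above). The derivatives of $\hat G_\eta$ concentrate on the thin annulus $\big||\xi|-k_0\big|\lesssim\eta^2/k_0$, where $|w|\sim\eta$, and the governing integrals $\int ds\,(s^2+\eta^4)^{-(2j+1)/4}$ sit at the edge of convergence: the leading parts give clean powers $\eta_0^{-(2j-1)/2}$, but the subleading parts — those carrying a factor $|z|$ — produce integrals that diverge logarithmically and must be cut off at $|\xi|\sim K$, generating the logarithm that the arbitrarily small $\eps$ is spent to absorb. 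This is the same mechanism as the $\eps$ used in the mixed-derivative estimate of Lemma~\ref{lemG} and matches the remark after Theorem~\ref{th2} that $\eta_0^{-1/2-\eps}$ should sharpen to $\eta_0^{-1/2}|\log\eta_0|$. The accompanying bookkeeping — keeping every constant uniform in $k_0$, $L$ and $z$, and using the standing assumptions $(k_0L)^{-1}\lesssim\eta_0$ and $\log(k_0z)\le\eta_0^{-1/2}$ to absorb the lower-order and logarithmic terms — is the delicate but routine part that the full proof must carry through.
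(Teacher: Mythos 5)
Your skeleton is the paper's proof: identify $K^2\widetilde C^N_{p,q}$ with the Fourier coefficients of $\hat G_\eta\hat U^N$ on $S_K$, integrate by parts up to three times in $\xi$, bound the radial derivatives of $\hat G_\eta$ via the lower bound $|k_0^2(\eta)-|\xi|^2|^{1/2}\gtrsim ((k_0^2-|\xi|^2)^2+\eta^4)^{1/4}$, obtain $\|\partial^{(j)}_\xi\hat G_\eta\|_{L^1(S_K)}\lesssim k_0^{1-j}\eta_0^{-(2j-1)/2-\eps}$ with $\eps$ absorbing a logarithm, and symmetrize the two mixed third-order estimates into $|ph|^{3/2}|qh|^{3/2}$. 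However, your handling of the boundary terms, and consequently your account of where $\ell_U$ enters, contains a genuine gap.

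First, the surviving boundary pieces are not exponentially small in the regime where the lemma is used. On the faces $|\xi_1|=K/2$ the damping factor is $e^{-\sqrt{K^2/4-k_0^2}\,z}$, but the whole construction takes $K=Mk_0$ with $M\lesssim (\lambda/\ell)N_z$, i.e. $Kz=O(1)$ (precisely because wavenumbers beyond $1/z$ are damped, there is no point including them); so this factor is $O(1)$ and cannot be used to discard anything. The paper does not discard these terms: it bounds the boundary term after the second integration by parts by $\max_{\xi\in S_K}|\partial_{\xi_1}\hat U(\xi)|+\max_{\xi\in S_K}|\hat U(\xi)|K^{-1}$, and these boundary contributions are exactly the origin of the $\ell_U$ and $K^{-1}$ terms inside $\beta_U$ and $\gamma_U$, hence inside $D_1$ and $D_2$. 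Second, this makes your plan internally inconsistent: you claim $\ell_U$ is reproduced by volume Leibniz terms of the type $\|\partial\hat G_\eta\|_{L^1(S_K)}\max|\partial\hat U^N|$, but that pairing yields $\eta_0^{-1/2-\eps}\,\ell_U\max|\hat U^N|$ (and worse at higher order), which is \emph{not} bounded by $D_1\max|\hat U^N|$, since the lemma only allows $\ell_U$ with coefficient one in $\beta_U$. To land on the stated constants the cross terms must be paired the other way, $\max_{S_K}|\partial\hat G_\eta|\cdot\|\partial\hat U^N\|_{L^1(S_K)}$, producing the $k_U$ terms as in the paper; then the pointwise derivative data $\ell_U$ can only come from the boundary terms you propose to throw away. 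Your route can be repaired: with your parity cancellation the surviving face terms are controlled by the \emph{polynomial} decay $|\partial_{\xi_1}\hat G_\eta|\lesssim (z+|\xi|^{-1})/|\xi|$ on $|\xi|\geq K/2$, giving $\max|\hat U|(z+K^{-1})\lesssim\max|\hat U|\,k_0^{-1}\lesssim \max|\hat U|\,D_1$ at second order and $\max|\hat U|(k_0^{-2}+\ell_U k_0^{-1})\lesssim\max|\hat U|\,D_2$ at third order (using $\ell_Uk_0^{-1}\leq \ell_U^2+k_0^{-2}$), but this is an argument you must make; exponential smallness is not it. A final, minor point: the crude bound $\max|\hat U^N|\,\|\hat G_\eta\|_{L^1(S_K)}$ gives $\|\hat G_\eta\|_{L^1(S_K)}\lesssim k_0+z^{-1}\sim z^{-1}$, i.e. the bracket term $z$, not the stated (stronger) constant $k_0^{-2}z^{-1}$; this is harmless because that term is never the binding one in the aliasing sums, but your claim that the $L^1$ bound "furnishes the leading constant of the bracket" is not accurate.
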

\begin{proof} The proof is tedious but not difficult, and consists in estimating integrals of the partial derivatives of $\hat  G_\eta \hat U^N$. We start with estimating $\hat G_\eta$.

  \paragraph{Step 2a: estimates on $\hat G_\eta$.} We denote by $\partial_{|\xi|}$ the radial derivative with respect to $|\xi|$. The following explicit representation of the complex square with positive imaginary part is helpful:
  \be \label{Csqrt}
\sqrt{x+iy}=\frac{1}{\sqrt{2}} \textrm{sign}(y)\sqrt{\sqrt{x^2+y^2}+x}+\frac{i}{\sqrt{2}} \sqrt{\sqrt{x^2+y^2}-x}.
\ee
With $x=k_0^2-|\xi|^2$ and $y=\eta^2$, we find $\Im \sqrt{k_0^2(\eta)-|\xi|^2} \geq \Im \sqrt{k_0^2-|\xi|^2}$. We have also $|\sqrt{k_0^2(\eta)-|\xi|^2}|=((k_0^2-|\xi|^2)^2+\eta^4)^{1/4}\geq |k_0^2-|\xi|^2|^{1/2}$. Using the last two results, we then find
$$
|\hat G_\eta(\xi)| \lesssim \frac{e^{-\Im \sqrt{k_0^2-|\xi|^2}z}}{|\sqrt{k_0^2-|\xi|^2}|},
$$
which is integrable around the singularity at $|\xi|=k_0$. 
The first order radial derivative reads
\bee
\partial_{|\xi|}\hat G_\eta(\xi)&=& 2 i \pi |\xi|\frac{e^{i\sqrt{k_0^2(\eta)-|\xi|^2}z}}{k_0^2(\eta)- |\xi|^2} \left(-z +\big(k_0^2(\eta)- |\xi|^2\big)^{-1/2}\right),\eee
which has non-integrable singularities and we need to use the regularization parameter $\eta$. For this, let $k_0(\eta)=\sqrt{k_0^2+ i \eta^2}$ and set $\eta=k_0 \sqrt{\eta_0}$ for simplicity. It follows directly from \fref{Csqrt} with $x=k_0^2$ and $y=\eta^2$ that $\Re k_0(\eta) \geq k_0 $, and that, for $0 \leq \eta_0 \leq 1$, $\Im k_0(\eta) \geq k_0 \eta_0 / (2 \sqrt{2}) $. These previous inequalities give
$$
|k_0^2(\eta)- |\xi|^2|=|k_0(\eta)-|\xi|||k_0(\eta)+|\xi|| \gtrsim \eta_0k_0 (k_0+|\xi|).
$$
For $0 \leq |\xi| \leq 2 k_0$, $0<\eps<1$, this yields a first bound
$$
|\partial_{|\xi|}\hat G_\eta(\xi)|\lesssim  \frac{|\xi|}{k_0 |k_0-|\xi||^{1-\eps} (\eta_0 k_0)^{\eps}} \left(z +\big(\eta_0 k_0^2\big)^{-1/2}\right),
$$
which has now an integrable singularity at $k_0$. The expression can be simplified using that $z \leq \lambda \leq k_0^{-1}$ and that $\eta_0 <1$. For $|\xi| \geq 2 k_0$, we have $|\xi|^2-k_0^2 \geq 3 |\xi|^2 /4$, so that
$$
|\partial_{|\xi|}\hat G_\eta(\xi)| \lesssim \frac{e^{-\Im \sqrt{k_0^2-|\xi|^2}z}}{|\xi|}\left(z +|\xi|^{-1}\right).
$$
Combining both estimates we find
\be \label{estGH2}
|\partial_{|\xi|}\hat G_\eta(\xi)| \lesssim
\left\{
  \begin{array}{l}
 \ds   \frac{|\xi|}{k_0^2 |k_0-|\xi||^{1-\eps} (\eta_0 k_0)^{\eps} \eta_0^{1/2}} \qquad \textrm{for} \qquad |\xi| \leq 2 k_0\\[5mm]
   \ds  \frac{e^{-\Im \sqrt{k_0^2-|\xi|^2}z}}{|\xi|}\left(z +|\xi|^{-1}\right)  \qquad \textrm{for} \qquad |\xi| \geq 2 k_0.
    \end{array}
  \right.
\ee
Regarding the second-order radial derivative we have
\bee
\partial^2_{|\xi|}\hat G_\eta(\xi)&=& 2 i \pi \frac{e^{i\sqrt{k_0^2(\eta)-|\xi|^2}z}}{k_0^2- |\xi|^2} \left(-z +\big(k_0^2(\eta)- |\xi|^2\big)^{-1/2}\right)\\
&&2  \pi z|\xi|^2\frac{e^{i\sqrt{k_0^2(\eta)-|\xi|^2}z}}{(k_0^2(\eta)- |\xi|^2)^{3/2}} \left(-z +\big(k_0^2(\eta)- |\xi|^2\big)^{-1/2}\right)\\
&&+4 i \pi |\xi|^2\frac{e^{i\sqrt{k_0^2(\eta)-|\xi|^2}z}}{(k_0^2(\eta)- |\xi|^2)^{2}} \left(-z +\big(k_0^2(\eta)- |\xi|^2\big)^{-1/2}\right)\\
&&+2 i \pi |\xi|^2 \frac{e^{i\sqrt{k_0^2(\eta)-|\xi|^2}z}}{(k_0^2(\eta)- |\xi|^2)^{5/2}}
\eee
Proceeding as in the first order case, we find when $0 \leq |\xi| \leq 2 k_0$,
\begin{align*}
&|\partial^2_{|\xi|^2}\hat G_\eta(\xi)|\lesssim \frac{e^{-\Im \sqrt{k_0^2-|\xi|^2}z} }{|k_0(\eta)^2-|\xi|^2|} \times \\
&\left(|z| +\frac{1}{((k_0+|\xi|) \eta_0k_0)^{1/2}}+\frac{z^2|\xi|^2}{((k_0+|\xi|) \eta_0k_0)^{1/2}}+\frac{z |\xi|^2}{((k_0+|\xi|) \eta_0k_0)}+\frac{|\xi|^2}{((k_0+|\xi|) \eta_0k_0)^{3/2}}\right).
\end{align*}
Using that $0 \leq |\xi| \leq 2 k_0$, $z \leq \lambda \leq k_0^{-1}$, and that $\eta_0<1$, this simplifies to
\begin{align*}
&|\partial^2_{|\xi|^2}\hat G_\eta(\xi)|\lesssim \frac{1}{k_0^2 |k_0-|\xi||^{1-\eps} (\eta_0 k_0)^{\eps} \eta_0^{3/2}}.
\end{align*}
When $|\xi| \geq 2 k_0$, we proceed as above, and find for all $|\xi|$,
\be \label{estGH3}
|\partial^2_{|\xi|}\hat G_\eta(\xi)| \lesssim
\left\{
  \begin{array}{l}
 \ds   \frac{1}{k_0^2 |k_0-|\xi||^{1-\eps} (\eta_0 k_0)^{\eps} \eta_0^{3/2}} \qquad \textrm{for} \qquad |\xi| \leq 2 k_0\\[5mm]
   \ds  \frac{e^{-\Im \sqrt{k_0^2-|\xi|^2}z}}{k_0|\xi|}\left(z +|\xi|^{-1}\right)  \qquad \textrm{for} \qquad |\xi| \geq 2 k_0.
    \end{array}
  \right.
\ee



For the third-order derivative, the calculations are very similar and we give the result without proof: we find
\be \label{estGH4}
|\partial^3_{|\xi|}\hat G_\eta(\xi)| \lesssim
\left\{
  \begin{array}{l}
 \ds   \frac{1}{k_0^3 |k_0-|\xi||^{1-\eps} (\eta_0 k_0)^{\eps} \eta_0^{5/2}} \qquad \textrm{for} \qquad |\xi| \leq 2 k_0\\[5mm]
   \ds  \frac{e^{-\Im \sqrt{k_0^2-|\xi|^2}z}}{k_0^2|\xi|}\left(z +|\xi|^{-1}\right)  \qquad \textrm{for} \qquad |\xi| \geq 2 k_0.
    \end{array}
  \right.
\ee

The partials of $\hat G_\eta$ can finally be controlled using the radial derivatives by
\bea \label{par1}
|\partial^2_{\xi_1 \xi_2} \hat G_\eta (\xi)| &\lesssim& |\partial^2_{|\xi|} \hat G_\eta  (\xi)|+\frac{1}{|\xi|}|\partial_{|\xi|} \hat G_\eta  (\xi)|\\ \label{par2}
|\partial^3_{\xi_i^2 \xi_j} \hat G_\eta  (\xi)| &\lesssim&  |\partial^3_{|\xi|} \hat G_\eta  (\xi)|+\frac{1}{|\xi|}|\partial^2_{|\xi|} \hat G_\eta  (\xi)|+\frac{1}{|\xi|^2}|\partial_{|\xi|} \hat G_\eta  (\xi)|.
\eea

We can now turn to the proof of Lemma \ref{estH}. We define $H(\xi)=\hat G_\eta \hat U^N$, and $x=(x_1,x_2)= h(p,q)$  for simplicity. Since $\hat G_\eta$ is even, since $\hat{U}^N$ is periodic, and since $\partial_{\xi_1} \hat G_\eta$ is even w.r.t $\xi_2$, we find after two integrations by parts,
\bea \label{IP1}
H^F(x)&:=&\int_{S_K} H (\xi)e^{i \xi \cdot x}d\xi=-\frac{1}{i x_1} \int_{S_K} \partial_{\xi_1} H (\xi)e^{i \xi \cdot x}d\xi\\  \nonumber 
&=&\frac{1}{x_1^2}\left[\partial_{\xi_1} H (\xi)e^{i \xi \cdot x}\right] -\frac{1}{x_1^2} \int_{S_K} \partial^2_{\xi_1} H (\xi)e^{i \xi \cdot x}d\xi,
\eea
as well as
\bee
H^F(x)&=&-\frac{1}{x_1 x_2} \int_{S_K} \partial^2_{\xi_1 \xi_2} H (\xi)e^{i \xi \cdot x}d\xi\\
&=&\frac{i}{x_1^2 x_2} \left[\partial^2_{\xi_1 \xi_2} H (\xi)e^{i \xi \cdot x}\right]-\frac{i}{x_1^2 x_2} \int_{S_K} \partial^3_{\xi_1^2 \xi_2} H (\xi)e^{i \xi \cdot x}d\xi. 
\eee
Above, the terms in brackets denote the boundary terms coming from the integration by parts. We proceed now to estimating the integrals. We find, using \fref{estGH2}, for all $\eps \in (0,1)$,
\bee
\int_{S_K} |\partial_{|\xi|} \hat G_\eta(\xi)| d \xi &\lesssim& \eta_0^{-1/2-\eps} \int_{r\leq 2} \frac{ dr }{|1-r|^{1-\eps}}\\
&&+\int_{r\geq 2} (z k_0 +r^{-1})e^{-\sqrt{r^2-1} k_0 z}d\xi.
\eee
When $r\geq 2$, we have $r^2-1 \geq 3 r^2 /4$, and using this in the second integral gives
$$
\int_{r \geq k_0 z} (1+r^{-1})e^{-r}dr  \lesssim \log(k_0 z).
$$
Using the assumption that $\log(k_0 z) \lesssim \eta_0^{-1}$, we find
$$
\int_{S_K} |\partial_{|\xi|} \hat G_\eta(\xi)| d \xi \lesssim  \eta_0^{-1/2-\eps}.
$$
Proceeding in the same manner and using \fref{estGH3}-\fref{estGH4}, we find, for all $\eps \in (0,1)$,
\bee
\int_{S_K} |\xi|^{-1}|\partial_{|\xi|} \hat G_\eta(\xi)| d \xi &\lesssim&  k_0^{-1}\eta_0^{-1/2-\eps}\\ \nonumber
\int_{S_K} |\partial^2_{|\xi|} \hat G_\eta(\xi)| d \xi &\lesssim&  k_0^{-1}\eta_0^{-3/2-\eps}\\ \nonumber
\int_{S_K} |\xi|^{-1}|\partial^2_{|\xi|} \hat G_\eta(\xi)| d \xi &\lesssim&  k_0^{-2}\eta_0^{-3/2-\eps}\\ \nonumber
\int_{S_K} |\partial^3_{|\xi|} \hat G_\eta(\xi)| d \xi &\lesssim&  k_0^{-2}\eta_0^{-5/2-\eps}, \nonumber
\eee
which yields, from \fref{par1}-\fref{par2},
\bea \label{int1}
\int_{S_K} |\partial_{\xi_j} \hat G_\eta(\xi)| d \xi &\lesssim&  \eta_0^{-1/2-\eps}\\ \label{int2}
\int_{S_K} |\partial^2_{\xi_j} \hat G_\eta(\xi)|+|\partial^2_{\xi_i \xi_j} \hat G_\eta(\xi)| d \xi &\lesssim&  k_0^{-1}\eta_0^{-3/2-\eps}\\
\int_{S_K} |\partial^3_{\xi_j^2 \xi_i} \hat G_\eta(\xi)| d \xi &\lesssim&  k_0^{-2}\eta_0^{-5/2-\eps}. \label{int3}
\eea
From \fref{IP1} and \fref{int1}, this gives the first estimate
\bee
|x_j| |H^F(x)| &\lesssim& \max_{\xi \in S_K} |\hat G_\eta(\xi)| \int_{S_K}  |\partial_{\xi_j}\hat{U}(\xi)|d\xi + \eta_0^{-1/2-\eps}\max_{\xi \in S_K}  |\hat{U}(\xi)|\\
&\lesssim& \max_{\xi \in S_K}  |\hat{U}(\xi)| ( k_U k_0^{-1} \eta_0^{-1/2}+\eta_0^{-1/2-\eps}),
\eee
since $\max_{\xi \in S_K} |\hat G_\eta(\xi)| \lesssim k_0^{-1} \eta_0^{-1/2}$. We recall that $k_U$ is defined in \fref{defKU}. This gives
\bee
|x_j| |H^F(x)| &\lesssim& \max_{\xi \in S_K}  |\hat{U}(\xi)| ( \alpha_U+\eta_0^{-1/2-\eps})=\max_{\xi \in S_K}  |\hat{U}(\xi)|D_0,
\eee
where $\alpha_U=k_U k_0^{-1} \eta_0^{-1/2}$.

We move on now to the second-order derivative. We find that the boundary term $\left[\partial_{\xi_1} H (\xi)e^{i \xi \cdot x}\right]$ is bounded by
 \be \max_{\xi \in S_K}  |\partial_{\xi_1}\hat{U}(\xi)| +\max_{\xi \in S_K}  |\hat{U}(\xi)| K^{-1} . \label{boundary}\ee
 We have
\bee
\int_{S_K} |\partial^2_{\xi_j} H (\xi)e^{i \xi \cdot x}|d\xi &\leq & \max_{\xi \in S_K} |\hat{U}(\xi)| \int_{S_K} |\partial^2_{\xi_j} \hat G_\eta (\xi)|d\xi\\
&&+\max_{\xi \in S_K}  |\partial_{\xi_j} \hat G_\eta (\xi)| \int_{S_K} |\partial_{\xi_j}\hat{U}(\xi)| d\xi\\
&&+\max_{\xi \in S_K} |\hat G_\eta (\xi)| \int_{S_K}  |\partial^2_{\xi_j}\hat{U}(\xi)| d\xi.
\eee
Using that \fref{boundary}, the fact that $\max_{\xi \in S_K} |\partial_{x_j }\hat G_\eta(\xi)| \lesssim k_0^{-2} \eta_0^{-3/2}$ and \fref{int2}, this gives 
\begin{align*}
  (|x_j|^2+|x_i||x_j|)& |H^F(x)|\\
  &\lesssim \max_{\xi \in S_K}  |\hat{U}(\xi)| (\ell_U +K^{-1}+k_Uk_0^{-2} \eta_0^{-3/2}+L k_U k_0^{-1} \eta_0^{-1/2}+k_0^{-1} \eta_0^{-3/2-\eps}),
\end{align*}
and therefore 
\bee
(|x_j|^2+|x_i||x_j|) |H^F(x)|
&\lesssim &\max_{\xi \in S_K}  |\hat{U}(\xi)| (\beta_U+k_0^{-1} \eta_0^{-3/2-\eps})=\max_{\xi \in S_K}  |\hat{U}(\xi)| D_1,
\eee
where $\beta_U=\ell_U +k_Uk_0^{-2} \eta_0^{-3/2}+L k_U k_0^{-1} \eta_0^{-1/2}$.

With similar calculations involving \fref{int3}, we find with the third-order derivative, 
\bee
|x_i|^2|x_j||H^F(x)|
&\lesssim &\max_{\xi \in S_K}  |\hat{U}(\xi)| (\gamma_U+k_0^{-2} \eta_0^{-5/2-\eps})=\max_{\xi \in S_K}  |\hat{U}(\xi)| D_2,
\eee
where $\gamma_U =\ell_U^2+k_Uk_0^{-3} \eta_0^{-5/2}+L^2 k_U k_0^{-2} \eta_0^{-3/2}$. This ends the proof.
\end{proof}
\medskip

We have now everything needed to quantify the DFT error.
\paragraph{Step 2c: conclusion.}
The aliasing formula gives for $p,q \in I_N$,
 $$
 C^N_{p,q}-\widetilde{C}^N_{p,q}=\sum_{(m,n) \neq (0,0)} \widetilde{C}^N_{p+mN,q+nN}.
 $$
 We break the sum into 3 terms:
 \be \label{3terms}
 C^N_{p,q}-\widetilde{C}^N_{p,q}=\sum_{n \neq 0} \widetilde{C}^N_{p,q+nN}+\sum_{m \neq 0} \widetilde{C}^N_{p+mN,q}+\sum_{m \neq 0}\sum_{n \neq 0} \widetilde{C}^N_{p+mN,q+nN}.
 \ee
 Using Lemma \ref{estH}, the first term is bounded by, for any $\eps' \in (0,1)$, up to the term $\max_{\xi \in S_K}  |\hat{U}(\xi)|/K^2$ that will be reintroduced in the end,
 $$
 \frac{D_0^{1-\eps'} D_1^{\eps'}}{h^{1+\eps'}}\sum_{n \neq 0} \frac{1}{(q+nN)^{1+\eps'}}=\frac{D_0^{1-\eps'} D_1^{\eps'}}{(Nh)^{1+\eps'}}\sum_{n \neq 0} \frac{1}{(\frac{q}{N}+n)^{1+\eps'}} \leq \frac{D_0^{1-\eps'} D_1^{\eps'}}{L^{1+\eps'}},
 $$
 where we used that $q \in I_N$ so that $|q/N|\leq 1/2$. Using the subadditivity of the function $x^\alpha$ for $x>0$ and $\alpha \in (0,1)$, we find
 \bee
 D_0^{1-\eps'} (L^{-1}D_1)^{\eps'} &\leq & \big(\alpha_U^{1-\eps'}+ \eta_0^{-(1/2+\eps)(1-\eps')}\big)\big((L^{-1}\beta_U)^{\eps'}+ (k_0 L)^{-\eps'}\eta_0^{-(3/2+\eps)\eps'}\big)\\
 &\lesssim& (\alpha_U)^{1-\eps'}(k_0 \ell_U^2)^{\eps'}+ k_0^{-\eps'} \eta_0^{-1/2-\eps-\eps'} (1+\eta^{3\eps'/2+\eps \eps'}(k_0 \ell_U)^{2\eps'})).
 \eee
 To simplify the expression above, we remark that $\ell_U \lesssim L$. It is indeed clear that $\ell_U$ is bounded by the support of $U$, which is itself bounded by $L$ by construction. Using the assumption  $(k_0 L)^{-1} \lesssim \eta$, we have $L^{-1}\beta_U \lesssim \alpha_U$. Then,
\bee
 D_0^{1-\eps'} (L^{-1}D_1)^{\eps'} &\leq & \big(\alpha_U^{1-\eps'}+ \eta_0^{-(1/2+\eps)(1-\eps')}\big)\big((\alpha_U)^{\eps'}+ (k_0 L)^{-\eps'}\eta_0^{-(3/2+\eps)\eps'}\big)\\
 &\lesssim& \alpha_U+ \eta_0^{-1/2-\eps}.
\eee
 We obtain the same result for the second term in \fref{3terms}. For the third term, we control it by
\bee\frac{D_1^{1-\eps'} D_2^{\eps'}}{h^{2+\eps'}}\sum_{m \neq 0}\sum_{n \neq 0} \frac{1}{(p+nN)^{1+\eps'}(q+mN)^{1+\eps'}}
\leq \frac{D_1^{1-\eps'} D_2^{\eps'}}{L^{2+\eps'}},
 \eee
 where we find 
 \bee
 (L^{-1}D_1)^{1-\eps'} (L^{-1}D_2)^{\eps'}&\leq & \big((L^{-1}\beta_U)^{1-\eps'}+ (k_0L)^{-(1-\eps')}\eta_0^{-(3/2+\eps)(1-\eps')}\big)\\
 &&\qquad\times \big((L^{-2} \gamma_U)^{\eps'}
 + (k_0L)^{-2\eps'}\eta_0^{-(5/2+\eps)\eps'}\big).
 \eee
As before, using that $(k_0 L)^{-1} \lesssim \eta$ and $\ell_U \lesssim L$, we find $L^{-2}\gamma_U \lesssim \alpha_U$. The above estimate then reduces to
\bee
  (L^{-1}D_1)^{1-\eps'} (L^{-1}D_2)^{\eps'}&\lesssim& \alpha_U+ \eta_0^{-1/2-\eps}.
\eee
Hence, going back to \fref{3terms} and collecting previous estimates, we finally find
\begin{align*}
  &|C^N_{p,q}-\widetilde{C}^N_{p,q}|  \lesssim  K^{-2}L^{-1} \max_{\xi \in S_K}  |\hat{U}(\xi)| \big(\alpha_U+ \eta_0^{-1/2-\eps}\big).
    \end{align*}
 which allows us to estimate the DFT error. It remains now to treat the regularization error to end the proof of Theorem \ref{th2}.

\paragraph{Step 3: the regularization error.}
We write $k_0(\eta)=k_0\sqrt{1+i\eta_0}=k_r+ik_i$, and express $e^{i k_0(\eta) r}-e^{i k_0 r}$ as
$$
e^{i k_0(\eta) r}-e^{i k_0 r}=e^{i k_r r}e^{-k_i r}-e^{i k_0 r}e^{-k_i r}+e^{i k_0 r}e^{-k_i r}-e^{i k_0 r}.
$$
This yields the estimate
$$
\left|e^{i k_0(\eta) r}-e^{i k_0 r}\right| \leq r|k_r-k_0|+rk_i.
$$
Using \fref{Csqrt}, we find
$$
\frac{1}{\sqrt{2}}\sqrt{\sqrt{1+y^2}+1}-1=\frac{1}{2 \sqrt{2}}\int_0^y \frac{x}{\sqrt{1+x^2}\sqrt{\sqrt{1+x^2}+1} } dx \leq \frac{y^2}{4 \sqrt{2}}.
$$
With $y=\eta_0$, this gives $|k_r-k_0| \lesssim k_0 \eta_0^2 $. Regarding the imaginary part, we have when $y \geq 0$,
$$
\sqrt{1+y^2}-1=\int_0^y \frac{xdx}{\sqrt{1+x^2}} \leq \frac{y^2}{2},
$$
so that
$$\frac{1}{\sqrt{2}} \sqrt{\sqrt{1+\eta_0^2}-1} \leq \frac{\eta_0}{2}
$$
and $k_i \lesssim k_0 \eta_0$. As a conclusion, since $\eta_0 \leq 1$, $\left|e^{i k_0(\eta) r}-e^{i k_0 r}\right| \lesssim k_0 r \eta_0$. Then,
\bee
|f(x)-f_\eta(x)| &\leq&  \int_{\Rm^2} |G(x-y)-G_\eta(x-y)||U(y)| dy\\
&\lesssim &  k_0 \eta_0 \int_{\Rm^2}|U(y)| dy.
\eee

This concludes the proof of Theorem \ref{th2}.


\section{Removing evanescent modes} \label{appev} Consider
$$
g(x)=\int_{\Rm^2} \frac{e^{i \sqrt{k_0^2- |\xi|^2} z+i \xi \cdot x}}{\sqrt{k_0^2- |\xi|^2}}\hat U(\xi)d\xi,
$$
which is proportional to the Fourier transform of $f$ (defined in \fref{deff}), neglecting irrelevant constants. The function $\hat U(\xi)$ is smooth, decays to $0$ for $|\xi|\gg k_0$ but is not necessarily integrable. We change variables to arrive at
$$
g(x)=k_0 \int_{\Rm^2} \frac{e^{i \sqrt{1- |\xi|^2} k_0 z+i k_0 \xi \cdot x}}{\sqrt{1- |\xi|^2}}\hat U(k_0\xi) d\xi,
$$
We can use stationary phase techniques to analyze $g$. Suppose first that we are in a far field regime in all directions (i.e. transverse along $x$ and longitudinal along $z$), so that both $k_0 |x|=1/\eps \gg 1$ and $z k_0 \gg 1$. A short calculation shows that there is a  stationary point in $|\xi| <1$ (equal to $\xi=\hat x/(z^2/x^2+1)^{1/2}$, $\hat x=x/|x|$), which gives the leading contribution. So, as expected, the contribution of evanescent modes, corresponding to $|\xi| > 1$, is negligible compared to that of the propagating modes.

In MLB, $z$ is chosen to be small compared to $\min(\lambda,\ell_c)$ for the numerical scheme to be accurate. This means that we are always in a near field regime going from one layer to the next, independently of how large $|x|$ is. In that case, there is no stationary point, and the leading contribution comes from the region around $|\xi|=1$ where the integrand is singular, and there is no particular reason for the evanescent modes to be negligible. We confirm this with the following analysis. Let
$$
g_{\textrm{ev}}(x)=k_0 \int_{|\xi| \geq 1} \frac{e^{-\sqrt{|\xi|^2-1} k_0 z+i k_0 \xi \cdot x}}{i\sqrt{|\xi|^2-1}}\hat U(k_0\xi) d\xi,
$$
be the evanescent modes contribution. To simplify notation and without lack of generality, we suppose that $U$ is real-valued so that $\hat U(-\xi)=\hat U^*(\xi)$. We derive below an asymptotic expression of $g_{\textrm{ev}}$ as $\eps \to 0$. Going to polar coordinates and changing variables, we find
\begin{align*}
g_{\textrm{ev}}(x)&=k_0\int_{0}^{2\pi }\int_{r \geq 1} \frac{e^{-\sqrt{r^2-1} k_0 z+ir \cos \theta /\eps}}{i\sqrt{r^2-1}}\hat U\big(k_0r [\cos\theta \hat x+ \sin \theta \hat x_\perp]\big) r dr d \theta\\
                  &=k_0\int_{0}^{2\pi }\int_{r \geq 0} \frac{e^{-\sqrt{r(r+2)} k_0 z+i(r+1) \cos \theta /\eps}}{i \sqrt{r(r+2)}}\hat U\big(k_0(r+1) [\cos\theta \hat x+ \sin \theta \hat x_\perp]\big) (r+1) dr d \theta.
\end{align*}
Above, $\hat x=x/|x|$ and $\hat x_\perp$ is the vector obtained after a $\pi/2$ rotation of $\hat x$. We then set $r \to \eps r$ and neglect lower order terms in $\eps$, and find, after the change of variables $\cos \theta \to u$ in the second line,
\begin{align*}
  g_{\textrm{ev}}(x) &\simeq\eps^{1/2} k_0 \int_{0}^{2\pi }\int_{r \geq 0} \frac{e^{i\cos \theta /\eps+ir \cos \theta }}{i \sqrt{2r}}\hat U(k_0 [\cos\theta \hat x+ \sin \theta \hat x_\perp]) dr d \theta\\
                     &\simeq \eps^{1/2}k_0\int_{-1}^{1}\int_{r \geq 0} \frac{1}{i \sqrt{2r} \sqrt{1-u^2}}\left[e^{i u/\eps+ir u }\hat U(k_0 [u \hat x+ \sqrt{1-u^2} \hat x_\perp])- cc \right]dr d u.\end{align*}
                   Above, $cc$ means complex conjugate. The integral over $r$ can be performed first, giving
                   $$
                   \int_0^\infty \frac{e^{i r u}}{\sqrt{2r}} dr=\frac{1}{\sqrt{|u|}}F_0(\textrm{sign}(u)):=\frac{1}{\sqrt{|u|}}\int_0^\infty \frac{e^{i r \textrm{sign}(u)}}{\sqrt{2r}} dr.
                   $$
                   After the change of variables $u \to \eps u$ in the second line below and neglecting lower order terms in $\eps$, we finally find                
         \begin{align*}          
                  g_{\textrm{ev}}(x) & \simeq \eps^{1/2}k_0 \int_{-1}^{1}\frac{1}{i \sqrt{|u|}\sqrt{1-u^2}}\left[F_0(\textrm{sign}(u))e^{i u/\eps}\hat U(k_0 [u \hat x+ \sqrt{1-u^2} \hat x_\perp])- cc\right]d u \\
                                     & \simeq -i \eps k_0 [ F_1\hat U(k_0 \hat x_\perp)- cc],       
\end{align*}
where
$$
F_1=\int_\Rm \frac{F_0(\textrm{sign}(u)) e^{i u}}{\sqrt{|u|}} du.
$$
Note that an integration by parts shows that $F_0$ and $F_1$ are well defined. A similar calculation for the propagative modes yields
\begin{align*}  
  g_{\textrm{prop}}(x)&=k_0 \int_{|\xi| \leq 1} \frac{e^{i\sqrt{1-|\xi|^2} k_0 z+i k_0 \xi \cdot x}}{\sqrt{1-|\xi|^2}}\hat U(k_0\xi) d\xi\\
  &\simeq \eps k_0 [ F_2\hat U(k_0 \hat x_\perp)- cc], 
\end{align*}
where $F_2$ is defined like $F_1$ with $F_0$ is replaced by $F_0^*$. This shows that evanescent and propagating modes have nearly identical expressions, which is due to the fact that the leading contribution to $g$ comes from the singularity at $|\xi|=k_0$. Hence, when the field $\hat U$ has a non-zero contribution on $|\xi|=k_0$, removing evanescent modes creates an error as large as the contribution of the propagating modes in the transverse far field regime where $k_0|x| \gg 1$. 

\end{appendix}

\bibliographystyle{siam}
\bibliography{MLB.bib}


\end{document}